\renewcommand{\timeout}[2]{\lfloor #1 \rfloor #2  }
\renewcommand{\tick}{{\scriptstyle \mathsf{tick}}}
\newcommand{\fineC}{{\scriptstyle \mathsf{end}}}
\newif\ifdraft\drafttrue
\newcommand{\ntrans}[1]{\mathrel{{\trans{#1}}\makebox[0em][r]{$\not$\hspace{2ex}}}{\!}}
\newcommand{\rel}{\mathcal R}
\newcommand{\on}{\mathsf{on}}
\newcommand{\off}{\mathsf{off}}
\newcommand{\confCPS}[2]{#1 \, {\Join} \, #2}
\newcommand{\defn}{\triangleq}
\newcommand{\Edit}{\mathsf{E}}
\newcommand{\funEdit}[1]{\big \llbracket{#1} \big \rrbracket}
\renewcommand{\transf}[1]{\big \llbracket{#1} \big \rrbracket}
\newcommand{\ActSet}{\mathsf{Act}}
\newcommand{\SensSet}{\mathsf{Sens}}
\newcommand{\ChanSet}{\mathsf{Chn}}
\tikzset{
	>=stealth',
	punkt/.style={
		rectangle,
		rounded corners,
		draw=black, very thick,
		text width=6.5em,
		minimum height=2em,
		text centered},
	pil/.style={
		->,
		thick,
		shorten <=2pt,
		shorten >=2pt,}
}
\tikzstyle{line} = [draw, -latex,thick,
\newcommand{\Xrec}{\mathsf{X}}
\newcommand{\Yrec}{\mathsf{Y}}
\newcommand{\Sys}{N}
\newcommand{\Malware}{\mathit{M}}
\renewcommand{\confCPS}[2]{#1 \! \vdash \!  {\boldsymbol{ \{ }}#2{\boldsymbol{ \} }}}
\newcommand{\subalign}[1]{%
	\vcenter{%
		\Let@ \restore@math@cr \default@tag
		\baselineskip\fontdimen10 \scriptfont\tw@
		\advance\baselineskip\fontdimen12 \scriptfont\tw@
		\lineskip\thr@@\fontdimen8 \scriptfont\thr@@
		\lineskiplimit\lineskip
		\ialign{\hfil$\m@th\scriptstyle##$&$\m@th\scriptstyle{}##$\crcr
			#1\crcr
		}%
	}
}
\renewcommand{\on}[1]{ {\mathsf{\scriptstyle on_{#1}}}  }
\renewcommand{\off}[1]{ {\mathsf{\scriptstyle off_{#1}}}  }
\newcommand{\open}{ {\mathsf{\scriptstyle open}}  }
\newcommand{\close}{ {\mathsf{\scriptstyle close}}  }
\newcommand{\req}[1]{ {\mathsf{\scriptstyle turn{#1}}}}
\newcommand{\ctrlDim}[1]{ \mathsf{dim}(#1)  }
\begin{document}  

\title{A process calculus approach to  correctness enforcement  of PLCs (full version)\thanks{An extended abstract will appear in the CEUR Workshop Proceedings of the \emph{21st Italian Conference on Theoretical Computer Science (ICTCS 2020).}}}

\titlerunning{A process calculus approach to enforcement of PLCs} 

\author{Ruggero Lanotte\inst{1} \and 
Massimo Merro\inst{2}  \and 
Andrei Munteanu\inst{2} }

\authorrunning{R. Lanotte and M. Merro and A. Munteanu}
\institute{Universit\`a dell'Insubria, Como, Italy
	\and 
	Universit\`a degli Studi di Verona, Verona,   Italy
}

\pagestyle{plain}%

\maketitle
\begin{abstract}
We define a simple process calculus, based on  Hennessy and Regan's \emph{Timed Process Language}, for specifying networks of communicating  \emph{programmable logic controllers} (PLCs) enriched with  monitors 
 enforcing specification compliance at runtime. We define a synthesis algorithm that given an uncorrupted PLC returns a monitor that enforces the correctness of the PLC, even when  injected with \emph{malware} that may forge/drop actuator commands and  inter-controller communications.    Then, we strengthen  the capabilities of  our monitors by allowing the insertion of actions  to \emph{mitigate}  malware activities. This gives us \emph{deadlock-freedom monitoring}:  malware may not drag   monitored controllers into   deadlock  states. Last but not least, our enforcing monitors represent an effective formal mechanism for prompt detection of malicious activities within PLCs. 
\end{abstract}

\keywords{Process calculus \and PLC correctness \and Runtime enforcement \and Malware detection.}

\section{Introduction}


 \emph{Industrial Control System} (ICSs) are distributed systems controlling physical processes via \emph{programmable logic controllers} (PLCs) connected to  \emph{sensors} and  \emph{actuators}. PLCs  have an ad-hoc architecture 
 to execute simple processes known as \emph{scan cycles}. Each scan cycle consists  of three phases: (i) reading of the \emph{sensor measurements} of the physical process; (ii) derivation of the commands to guide the evolution of the physical process;
 (iii) transmission of the calculated \emph{commands}  to the \emph{actuator devices}.

 Published scan data show how thousands of  
 PLCs are directly accessible from the Internet \cite{Radvanovsky2013}. 
When this is not the case,   PLCs  are  often connected to each other in  \emph{field communications networks}, opening the way to the spreading of 
worms 
such as the PLC-Blaster worm~\cite{BLACKHAT2016} 
or the PLC PIN Control attack~\cite{Abbasi2016}. 

As a consequence,  extra \emph{trusted hardware components} have been  proposed to enhance the security of  ICS architectures~\cite{McLaughlin-ACSAC2013,Mohan-HiCONS2013}. 
In this respect, McLaughlin~\cite{McLaughlin-ACSAC2013} proposed to add a policy-based \emph{enforcement mechanism}  to mediate  the actuator commands transmitted by the PLC to the physical  plant, whereas Mohan et al.~\cite{Mohan-HiCONS2013} introduced an architecture in which every PLC runs under the scrutiny  of a 
 \emph{monitor} which looks for deviations with respect to \emph{safe behaviours}; 
 if the behaviour of the PLC is not as expected 
 then the control passes to a \emph{safety controller} which maintains the plant within the required safety margins. 

Both  architectures above have been validated by means of simulation-based techniques. However, as far as we know, formal methodologies have not been used yet to model and formally verify security-oriented architectures for ICSs. 

The \emph{goal} of the paper is to verify the effectiveness of a \emph{process calculus approach}
to formalise \emph{runtime enforcement} of specification compliance in  networks \nolinebreak of \nolinebreak PLCs  injected with  \emph{colluding malware} that may forge/drop both actuator commands and inter-controller communications\footnote{We do not deal with alterations of sensor signals  within a PLC, as they can already be  altered either at the network level or within the sensor devices~\cite{ACM-survey2018}.}. 
\emph{Process calculi} represent a successful and widespread formal approach in \emph{concurrency theory} relying on a variety of \emph{behavioural equivalences} (\emph{e.g.}, trace equivalence and bisimilarity) for studying complex systems, such as 
IoT systems~\cite{LBdF13,BDFG17,LaMe18} and cyber-physical systems~\cite{LaMeTi18}, and used in many \nolinebreak areas, including verification of security protocols~\cite{appliedpi,spi} and  security analysis of \emph{cyber-physical attacks}~\cite{LMMV20}. 
On the other hand,  \emph{runtime enforcement} \cite{Schneider2000,Ligatti2005,Falcone3} 
 is a  powerful verification/validation technique 
 aiming at correcting possibly-incorrect executions \nolinebreak of a system-under-scrutiny (SuS) via a kind of monitor 
 that acts as a \emph{proxy}  between the SuS  and its environment. 
%
%

 Thus, we propose  to synthesise
a proxy from an uncorrupted PLC, to form a \emph{monitored PLC}  ensuring:
\begin{itemize}
  \item  \emph{observation-based monitoring}, \emph{i.e.}, the proxy should only look at the observables of the PLC, 
and not at its internal 
 execution; 
\item \emph{transparency}, \emph{i.e.}, the semantics of the monitored PLC must not differ from the semantics of the genuine (\emph{i.e.}, uncorrupted) PLC; 
\item \emph{sound execution}  of the monitored PLC, to prevent incorrect executions;
\item \emph{deadlock-freedom}, \emph{i.e.}, an injected malware may not drag a monitored PLC into a deadlock state;
 \item \emph{prompt detection} of PLC misbehaviours to rise alarms addressed to system engineers (who will reinstall an obfuscated variation of the code  \nolinebreak of
   \nolinebreak  the \nolinebreak PLC);
   \item \emph{mitigation} of malicious activities within the monitored PLC. 
\end{itemize}

Obviously,  if the PLC is compromised then its correct execution can only be enforced with the help of an extra component, a  \emph{secured proxy}, as advocated by McLaughlin~\cite{McLaughlin-ACSAC2013} and  Mohan et al.~\cite{Mohan-HiCONS2013}.
This means that any implementation of our proposed proxy should be \emph{bug-free} to deal with possible infiltrations of \nolinebreak   malware.
This may seem like we just moved the problem over to securing the proxy. However,  this is not the case because the proxy only needs to enforce correctness at runtime, while the PLC controls its physical process relying on malware-prone communications via the Internet or the USB ports. Of course, by no means runtime reconfigurations of the secure proxy should be allowed.

\paragraph{Contribution.}

We define a simple timed process calculus, based on  Hennessy and Regan's \emph{Timed Process Language} (TPL)~\cite{HR95},   for specifying  networks of communicating monitored controllers, possibly injected with \emph{colluding malware} that may forge/drop both actuator commands and inter-controller communications. Monitors are formalised in terms of a sub-class of finite-state  Ligatti  et al.'s  \emph{edit automata}~\cite{Ligatti2005}. 
A  network composed of $n$ PLCs $\mathrm{Ctrl}_i$,  running in parallel, each of which injected  with a malware $\mathrm{Malw}_i$, and enforced by a monitor $\mathsf{Mon}_i$, is represented as:
\begin{displaymath}
{
	\small
\confCPS{\mathsf{Mon}_1}{\mathrm{Ctrl}_1 | \mathrm{Malw}_1} \, 
\parallel
 \ldots
\parallel
\, 
\confCPS{\mathsf{Mon}_n}{\mathrm{Ctrl}_n | \mathrm{Malw}_n} \, . 
}
\end{displaymath}
Here, the parallel process  $\mathrm{Ctrl}_i | \mathrm{Malw}_i$ is a formal abstraction of the sequential execution of the PLC code $\mathrm{Ctrl}_i $ injected with the malware $\mathrm{Malw}_i$.

Then, we propose a \emph{synthesis function} $\funEdit{-}$ that, given an
 uncorrupted deterministic PLC  $\mathrm{Ctrl}$ 
returns, in \emph{polynomial time}, a \emph{syntactically deterministic} \cite{Acetoetal2017} edit automaton $\funEdit{\mathrm{Ctrl}}$  to form a monitored PLC 
that ensures: \emph{observation-based monitoring}, \emph{transparency},   \emph{sound execution} of the monitored PLC,  \emph{prompt detection} of alterations of \nolinebreak the behaviour of the monitored PLC.
All these properties can be expressed with a single algebraic equation: 
  \begin{equation}
    \small 
  \label{eq_intro}
  {
\prod_{i=1}^{n}\confCPS{\funEdit{\mathrm{Ctrl}_i}}{\mathrm{Ctrl}_i | \mathrm{Malw}_i}
\; \simeq \; 
\prod_{i=1}^{n}\confCPS{\go}{\mathrm{Ctrl}_i}
}
\end{equation}
\noindent 
for arbitrary malware $\mathrm{Malw}_i$, where $\simeq$ denotes   \emph{trace equivalence} and $\go$ is the monitor that allows any action.
Here, intuitively, each monitor $\funEdit{\mathrm{Ctrl}_i}$ prevents incorrect executions of the compromised controller $\mathrm{Ctrl}_i | \mathrm{Malw}_i$. 

However,  
our monitors do not protect against 
 malware that 
   may drag a monitored PLC  into a deadlock state. In fact,  Equation~\ref{eq_intro} does not hold \nolinebreak with respect to \emph{weak bisimilarity}, which is a notoriously  \emph{deadlock-sensitive} semantic equivalence. 
Thus, in order to achieve deadlock-freedom we equip our  monitors 
with the semantic capability to  \emph{mitigate} those malicious activities that  \nolinebreak may deadlock the controller. In practice, our  monitors will be able to \emph{insert} actions, \emph{i.e.}, to emit correct actions in full autonomy to complete  scan cycles. 
The enforcement resulting from the introduction of mitigation allows us to recover \emph{deadlock-freedom monitoring\/} by proving Equation~\ref{eq_intro}
 with respect to weak bisimilarity.

 \paragraph{Outline.} Section~\ref{sec:calculus} defines our process calculus to express monitored  controllers injected with malware.
 Section~\ref{sec:case-study} provides a non-trivial and modular use case  in the context of  \emph{water transmission networks}.
  Section~\ref{sec:safety} defines an algorithm to synthesise  our monitors.
 Section~\ref{sec:liveness} introduces mitigation to recover 
 deadlock-freedom. Section~\ref{sec:conclusion}  draws conclusions and discusses related   work. Full proofs can be found   
in the appendix. 
 
\section{A timed process calculus for monitored PLCs}
\label{sec:calculus}
We define our process calculus 
as an extension of Hennessy and Regan's~TPL~\cite{HR95}. 

Let us start with some preliminary notation. We use $s, s_k \in \mathsf{Sens}$ for \emph{sensor signals},  $a,a_k \in \mathsf{Act}$ for \emph{actuator commands}, and   $c, c_k \in\mathsf{Chn}$ for \emph{channel names}.

\paragraph{Controller.}
In our setting, controllers are nondeterministic sequential  timed processes evolving through three different  phases: \emph{sensing} of  sensor signals, \emph{communication} with other controllers, and \emph{actuation}. For convenience,  we use four different syntactic categories to distinguish the four main states of a controller: $\mathbbm{Ctrl}$  for  initial states, $\mathbbm{Sens}$ for  sensing states, $\mathbbm{Com}$ for  communication states, and $\mathbbm{Act}$ for actuation states. In its initial state, a controller is a recursive process starting its scan cycle in the \emph{sensing phase}: 
\begin{displaymath}
\begin{array}{rcl} 
\mathbbm{Ctrl} \ni P & \Bdf & \fix \Xrec S 
\end{array}
\end{displaymath}
Notice that due to the cyclic behaviour of controllers, the process variable $\Xrec$ may syntactically occur only in 
the last phase, actuation.  We assume \emph{time guarded} recursion to avoid undesired \emph{zeno behaviours}. Intuitively, in time guarded recursion the process variable  must  occur  prefixed by at least one timed action \nolinebreak $\tick$.

During the sensing phase, the controller waits  for a \emph{finite} number of admissible sensor signals. If none of those signals arrives in the current time slot then the controller will \emph{timeout} moving to the following time slot (we adopt the TPL construct $\timeout{\cdot}{\cdot}$ for timeout). The controller may also sleep for a while, waiting for sensor signals to become stable. The syntax is the following: 
\begin{displaymath}
\begin{array}{rcl}
\mathbbm{Sens} \ni S   & \Bdf & \timeout{\sum_{i \in I} s_i.S_i}{S} 
  \Bor \tick.S 
 \Bor C 
\end{array}
\end{displaymath}

\begin{table}[t]
  \caption{LTS for controllers }
  \centering
          {\small
\begin{math}
\begin{array}{l@{\hspace*{8mm}}l}
\Txiom{Rec}
{ S{\subst {\fix \Xrec S} \Xrec} \trans{\alpha}  {S'} }  
{  \fix \Xrec S \trans{\alpha}  {S'}}
&
\Txiom{TimeS}
{-}
{ \tick.S  \trans{\tick}   S}
\\[10pt]
\Txiom{ReadS}
{j \in I}  
{ \timeout{\sum_{i\in I} s_i.S_i}{S} \trans{s_j}  {S_j}}
&
\Txiom{TimeoutS}
{ - }  
{ \timeout{\sum_{i\in I} s_i.S_i}{S} \trans{\tick}  {S}}
\\[10pt]
\Txiom{InC}
{j \in I}  
{ \timeout{\sum_{i\in I} c_i.C_i}{C} \trans{c_j}  {C_j}}

&
\Txiom{TimeoutInC}
{- }  
{ \timeout{\sum_{i\in I} c_i.C_i}{C} \trans{\tick}  {C} 
}
\\[10pt]
\Txiom{OutC}
{ - }  
{\timeout{\overline{c}.C}{C'} \trans{\overline{c}}  {C}}
&
\Txiom{TimeoutOutC}
{- }  
{\timeout{\overline{c}.C}{C'} \trans{\tick}  {C'}
}
\\[10pt]
\Txiom{WriteA}
{-}
{ { \overline{a}.A  \trans{\overline{a}}   A}}
&
\Txiom{End}
{-}
{ { \fineC.P \trans{\fineC}  P}}
\end{array}
\end{math}
}
\label{tab:sem-ctrl}
\end{table}

Once the sensing phase is concluded, the controller starts its \emph{calculations} that may depend  on \emph{communications} with other controllers.  
Controllers communicate to each other 
for mainly two reasons: either to receive notice about the state of other physical sub-processes or to require an actuation on a different physical process that will have an  influence on the  physical process governed by the controller. 
 We adopt a \emph{channel-based} \emph{handshake point-to-point}  communication paradigm. As PLCs usually work under timing constraints, our communication is always under timeout. The syntax for the communications phase is: 
\begin{displaymath}
\begin{array}{rcl}
\mathbbm{Comm} \ni C & \Bdf & \timeout{\sum_{i \in I}c_i.C_i}{C}\Bor \timeout{\overline{c}.C}{C} \Bor A 
\end{array}
\end{displaymath}
Thus, our controllers can either listen on a \emph{finite} number of communication channels  or  transmit on specific channels to pass some local information.

Finally, in the \emph{actuation phase} the controller eventually transmits a \emph{finite} sequence of commands to a number of different actuators, and then, it emits a special signal $\fineC$ 
 to denote the end of the scan  cycle. 
  After that, it restarts its cycle in the sensing phase via a recursive call denoted with  a process variable \nolinebreak $\Xrec$.
In order to ensure semantics closure, we also  have a construct $ \fineC.P$ which will be only generated at runtime but never  used to write  PLC programs.
\begin{displaymath}
\begin{array}{rcl}
\mathbbm{Act} \ni A  & \Bdf & \overline{a}.A  \Bor  \fineC.\Xrec \Bor   \fineC.P
\end{array}
\end{displaymath}


\begin{remark}[Scan cycle duration and maximum cycle limit]
  \label{rem:maximum-time}
  Notice that any scan cycle of a PLC must be completed within a \emph{maximum cycle limit} which depends on the controlled physical process; if this time limit is violated  the PLC stops and throws an exception~\cite{BLACKHAT2016}.
         Thus, the signal $\fineC$ must occur well before the \emph{maximum cycle limit}. We assume that our PLCs successfully complete their scan cycle in
          less than half of the maximum cycle limit.   
\end{remark}

The operational semantics of  controllers is given in Table~\ref{tab:sem-ctrl}. This is very much along the lines of Hennessy and Regan's TPL~\cite{HR95}.  In the following, we use the metavariables $\alpha$ and $\beta$ to range over the set of possible actions: {\small $\{ s, \overline{a}, a , \overline{c}, c,  \tau, \tick, \fineC \}  $}. These actions denote:  sensor  readings, actuator commands, drops of actuator commands, channel transmissions, channel receptions/drops,   internal actions,  passage of time, and  end of a scan cycle, respectively.

\paragraph{Malware.}
Let us  provide  a formalisation of the malware code that we assume may be injected in  a controller to compromise its runtime behaviour. The kind of malware we wish to deal with may perform the following malicious activities:
\begin{itemize}
\item forging fake channel transmissions towards other controllers (via actions $\overline{c}$); 
\item dropping incoming  communications from other controllers (via actions $c$); 
\item forging fake actuator commands (via actions $\overline{a}$); 
\item dropping  actuator commands launched by the controller (via actions $a$). 
\end{itemize}

The formal syntax of the admitted malware is the following: 
\begin{displaymath}
\begin{array}{rcl}
\mathbbm{Malw} \ni M & \Bdf & \timeout{\sum_{i \in I}\mu_i.M_i}{M}  \Bor \fix \Xrec M \Bor \Xrec \Bor \tick.M \Bor \nil
\end{array}
\end{displaymath}
where  the prefixes $\mu_i \in \{  \overline{c}, c , \overline{a},  a  \}$, for $i \in I$, denote the possible malicious actions mentioned above. Again, we assume \emph{time guarded} recursion to avoid undesired \emph{zeno behaviours}  introduced by the malware, that is, a malware can never  prevent the passage of time \emph{ad infinitum} in a controller\footnote{In general,  malware that aims to take control of the plant  has no interest in delaying the scan cycle and risking the violation of the maximum cycle limit  whose consequence would be the immediate controller shutting down~\cite{BLACKHAT2016}.}.

A straightforward operational semantics is given in Table~\ref{tab:sem-malware}.

\begin{table}[t]
  \caption{LTS for malware code}
  \centering 
{\small
\begin{math}
\begin{array}{c}

\Txiom{Malware}
{j \in I}  
{ \timeout{\sum_{i \in I}\mu_i.M_i}{M} \trans{\mu_j}  {M_j}}
\Q\Q
\Txiom{TimeoutM}
{ - }  
{ \timeout{\sum_{i \in I}\mu_i.M_i}{M} \trans{\tick}  {M}}
\\[15pt]
\Txiom{RecM}
{ \Malware {\subst {\fix \Xrec M}{\Xrec}} \trans{\alpha}  {M'} }  
{  \fix \Xrec M \trans{\alpha}  {M'}}
\Q
\Txiom{TimeM}
{-}
{ \tick.M  \trans{\tick}   M}
\Q
\Txiom{TimeNil}
{-}
{ \nil  \trans{\tick}   \nil}
\end{array}
\end{math}
}
\label{tab:sem-malware}
\end{table}

\paragraph{Compromised controller.}

In our setting, a compromised controller is  a controller that may potentially run in parallel with an arbitrary piece of malware. The syntax is the following: 
\[
\begin{array}{rcl}
Z & \Bdf & P \Bor S \Bor C \Bor A \\[2pt]
 \mathbbm{CCrtl} \ni J & \Bdf & Z \Bor Z |  M
\end{array}
\]
where $Z \in \mathbbm{Ctrl} \cup \mathbbm{Sens} \cup \mathbbm{Comm} \cup \mathbbm{Act} $ denotes a controller in an arbitrary state, and  $|$ is the standard process algebra construct for parallel composition.

The operational semantics of a compromised controller is given by the transition rules of Table~\ref{tab:comp-Ctrl}. 
Rule \rulename{Ctrl} models the genuine behaviour of the controller even in the presence of the malware (possibly waiting for a proper trigger). 
Rule \rulename{Inject} denotes the injection of a malicious action fabricated by the malware. Rule \rulename{DropAct} models the drop of an actuator command $\overline{a}$; in this manner, the command $\overline{a}$ never reaches its intended actuator device. 
Rule \rulename{TimePar} models \emph{time synchronisation} between the controller and the malware (we recall that malware cannot exhibit zeno behaviours).

\begin{remark}[Attacks on channels]
Notice that injection/drop on communication channels affects the interaction between controllers and not within them. For this reason, we do not have a rule for channels similar to 
\rulename{DropAct}. Inter-controller malicious activities on communication channels will be prevented by the \nolinebreak monitor. 
\end{remark}

\begin{table}[t]
  \caption{LTS for compromised controllers}
  \centering 
{\small 
\begin{math}
\begin{array}{l@{\hspace*{5mm}}l}
\Txiom{Ctrl}
{ Z \trans{\alpha} Z' \Q \alpha \neq \tick}  
{ Z | \Malware \trans{\alpha} Z' | \Malware}
&
\Txiom{Inject}
{  \Malware \trans{\alpha} \Malware' \Q \alpha \not \in \{ \tick, a \}}  
{ Z | \Malware \trans{\alpha} Z | \Malware'}
\\[10pt]
\Txiom{DropAct}
{Z \trans{\overline a} Z' \q \Malware \trans{a} \Malware'}  
{ Z | \Malware \trans{\tau} Z' | \Malware'}
&
{\Txiom{TimePar}
{Z \trans{\tick} Z' \q \Malware \trans{\tick} \Malware' \q  
}  
{ Z | \Malware \trans{\tick} Z' | \Malware'}}
\end{array}
\end{math}
}
\label{tab:comp-Ctrl}
\end{table}

\paragraph{Monitored controller(s).}
The core of our runtime enforcement relies on a  (timed) sub-class of finite-state  Ligatti et al.'s \emph{edit automata}~\cite{Ligatti2005}, \emph{i.e.}, a particular class of automata specifically designed to modify/suppress/insert actions in a generic system in order to preserve its correct behaviour.
Their syntax follows: 
\begin{displaymath}
\begin{array}{rcl}
\mathbbm{Edit} \ni \Edit & \Bdf &  \go  \Bor  
\sum_{i \in I} \eact{\alpha_i}{\beta_i}.\Edit_i \Bor \fix \Xrec \Edit \Bor \Xrec 
\end{array}
\end{displaymath}
Intuitively, the automaton $\go$ will admit any action of the monitored system, while the edit automaton $\sum_{i \in I} \eact{\alpha_i}{\beta_i}.\Edit_i$ \emph{replaces} actions $\alpha_i$  with $\beta_i$, and then continues as $\Edit_i$, for any $i \in I$, with $I$ finite. The operational semantics of our  edit automata is the following:
\begin{center}
{\small 
\begin{math}
\begin{array}{c}

\Txiom{Go}
{-}
{ { \go  \trans{\alpha / \alpha}  \go }}

\Q\q
\Txiom{Edit}
{j \in I}
{ \sum_{i \in I} \eact{\alpha_i}{\beta_i}.\Edit_i  \trans{\eact{\alpha_j}{\beta_j}}   \Edit_j }

\Q\q

\Txiom{recE}
{ \Edit {\subst {\fix \Xrec \Edit} {\Xrec} } \trans{\eact{\alpha}{\beta}}  {\Edit'} }  
{  \fix \Xrec \Edit \trans{\eact{\alpha}{\beta}}  {\Edit'}}
\end{array}
\end{math}
}
\end{center}
When an edit automaton performs a transition labeled $\eact{\alpha}{\beta}$, with $\alpha \neq \tau$ and $\beta=\tau$, we say that the automaton \emph{suppresses} the observable action $\alpha$.

Our \emph{monitored controllers}, written $\confCPS{\Edit}{J}$, are constituted by a (potentially) compromised controller $J$ and an edit automaton $\Edit$ enforcing the behaviour of  $J$  according to the following transition rule
for correction/suppression: 

 \begin{displaymath}
{
\Txiom{Enforce}
{  J \trans{\alpha}  J' \Q  \Edit \trans{\eact{\alpha}{\beta}}  \Edit'}
{  \confCPS {\Edit}{J} \trans{\beta}  \confCPS {\Edit'}{J'}} \, . 
}
\end{displaymath}
In a monitored controller  $\confCPS{\Edit}{J}$ with no malware inside, the enforcement never occurs, \emph{i.e.}, in rule \rulename{Enforce} we always have $\alpha=\beta$, and  the two components $\Edit$ and $J$ evolve in a tethered fashion, moving through related correct states.

We can easily generalise the concept of monitored controller to a \emph{field communications network} of parallel monitored controllers, each one acting on different actuators, 
 and  exchanging information via channels. These networks are formally defined via the  grammar:

\[
\mathbbm{FNet} \ni \Sys \Bdf \confCPS{\Edit}{J} \Bor \Sys \parallel \Sys 
\]
and described via the operational semantics given in Table~\ref{tab:sem-net}. Notice that monitored controllers may interact with each other via channel communication. Moreover, they may evolve in time when no communication occurs (we recall that neither controllers nor malware admit zeno behaviours). This ensures us  \emph{maximal progress}~\cite{HR95}, a desirable time property  when modelling real-time systems: communications are never postponed to future time slots. 

\begin{table}[t]
  \caption{LTS for monitored field communications networks}
  \centering 
{\small 
\begin{math}
\begin{array}{c}
\Txiom{ParL}
{ \Sys_1 \trans{\alpha} \Sys_1'}  
{ \Sys_1 \parallel \Sys_2 \trans{\alpha} \Sys_1' \parallel \Sys_2}
\Q\Q
\Txiom{ParR}
{ \Sys_2 \trans{\alpha} \Sys_2'}  
{ \Sys_1 \parallel \Sys_2 \trans{\alpha} \Sys_1 \parallel \Sys_2'}
\\[10pt]

  \Txiom{ChnSync}
{\Sys_1 \trans{c} \Sys_1' \q\, \Sys_2 \trans{\overline{c}} \Sys_2'}  
{ \Sys_1 \parallel \Sys_2 \trans{\tau} \Sys_1' \parallel \Sys_2'}
\\[10pt] 
{\Txiom{TimeSync}
{\Sys_1 \trans{\tick} \Sys_1' \Q \Sys_2 \trans{\tick} \Sys_2' \Q  \Sys_1 \parallel \Sys_2 \ntrans{\tau}}  
{ \Sys_1 \parallel \Sys_2 \trans{\tick} \Sys_1' \parallel \Sys_2'}
}
\end{array}
\end{math}
}
\label{tab:sem-net}
\end{table}

	Having defined operational semantics of  a monitored field network, we can easily concatenate single transitions to define \emph{execution traces}. 
	\begin{definition}[Execution traces]
		Given a trace {\small $t=\alpha_1 \ldots \alpha_k$}, we  write {\small $N \trans{t} N'$} as an abbreviation for {\small $N=N_0\trans{\alpha_1}N_1\trans{\alpha_2} \cdots \trans{\alpha_{k-1}}N_{k-1}\trans{\alpha_k}N_k=N'$}. 
	\end{definition}

	Execution traces  can be used to formally define both notions of   \emph{ano\-ma\-ly} \emph{detection} and \emph{correction},  achieved by the monitoring edit automaton.  Intui\-tively, the  detection occurs whenever the edit automaton does not allow  the  execu\-tion of a certain \emph{observable action} $\alpha$ proposed by a compromised controller; if $\alpha $ 
	is replaced with a different action $\beta$ then the  automaton does: (i)  \emph{correction}, if $\beta \neq \tau$, and (ii)  \emph{suppression}, if $\beta = \tau$. 
	\begin{definition}[Anomaly detection and correction]
		\label{def:detection}
		Let  $\confCPS{\Edit}{J}$ be a monitored controller, with $J= P | M$ being a  compromised controller.  
		We say that  the  edit automaton 
		$\Edit$  \emph{detects an anomaly} of $J$ during the execution of some \emph{observable action} $\alpha$ in the trace $t  \alpha$, only if:
		\begin{itemize}
			\item    {\small $P\trans{t}Z$}  (\emph{i.e.}, $t$ is a genuine trace of $P$); 
			\item  {\small $J\trans{t\alpha}J'$}, for some $J'\!$; 
			\item    {\small $\confCPS{\Edit}{J} \trans{t}   N $}, for some $N$ (\emph{i.e.}, $\Edit$ does allow  the trace $t$), and   {\small $\confCPS{\Edit}{J}  \trans{t\alpha}  N  $}, for no $N$  (\emph{i.e.}, $\Edit$ does not allow the trace $t \alpha$).
		\end{itemize}
		We say that  $\Edit$ \emph{corrects} (\emph{resp.}, \emph{suppresses}) the  \emph{observable action} $\alpha$ of the trace $t\alpha$  of $J$ only if {\small $\confCPS{\Edit}{J} \trans{t\beta} N'$}, for some action $\beta$, with $\tau \neq \beta \neq \alpha$ (\emph{resp.}, with $\beta = \tau$). 
                If {\small $P \trans{t\alpha} Z$}, for \nolinebreak some \nolinebreak $Z$,
		then we say that there is a \emph{false positive} when trying the execution of $\alpha$. 
	\end{definition}

\paragraph{Behavioural equalities.}
In the paper, we adopt standard behavioural equivalences between (networks of) monitored controllers. In particular, we use \emph{trace equivalence}, written $\simeq$, \emph{weak similarity}, denoted $\sqsubseteq$, and \nolinebreak \emph{weak bisimilarity}, \nolinebreak   written \nolinebreak  $\approx$.

\section{Use case: a simple water transmission network}
\label{sec:case-study}

		\begin{figure*}[t]
	\centering
	\includegraphics[width=0.8\textwidth]{./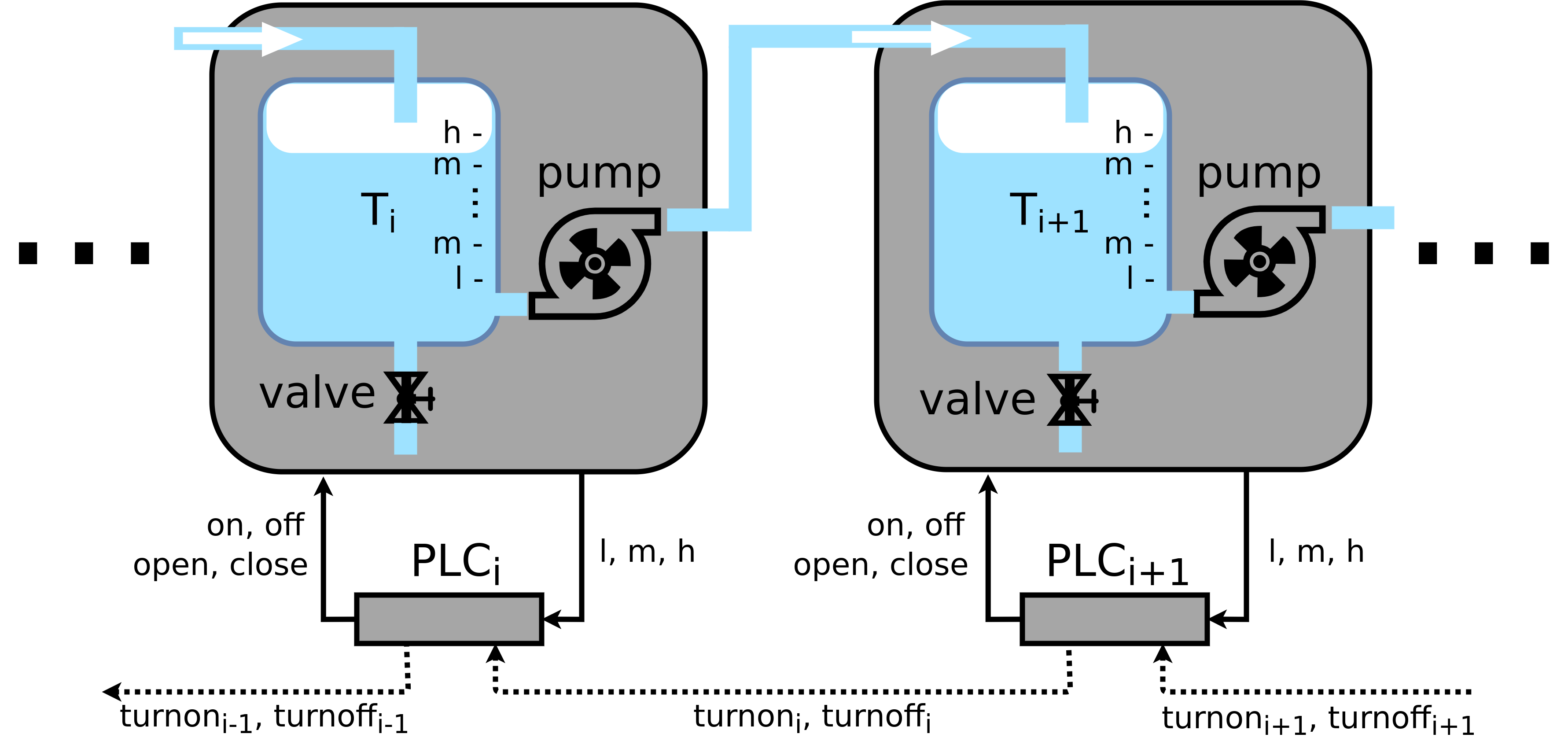}
	\caption{The typical structure of water transport networks}
	\label{figure:case-study-hankin}
\end{figure*}

  In this section,  we describe how to specify in our calculus \cname{} a non-trivial network of PLCs to control a water transmission network (WTN).

  Typical WTNs are composed of the following main physical elements: (i) tanks, (ii) pumping stations, (iii) water sources (\emph{e.g.}, boreholes), and (iv) pipes. In order, to monitor the status of each element, sensor devices are used to collect measurements regarding flow, pressure, level, and quality of the water that flows in the system. Figure~\ref{figure:case-study-hankin} gives us  a  typical configuration found in several water utilities, with the same structure replicated in larger infrastructures. 
  In this setup, borrowed from \cite{hankin_2020}, water is extracted from a water source (e.g., a borehole or another tank) using a pump. The pumps increase the water pressure which pushes the water into others tanks, which may be located a few kilometers away at a higher elevation. Each tank is equipped with a valve for the elimination of exceeding water.

  The finite-state machine control logic for WTNs is quite simple.
	In Table~\ref{table:plc-i-code}, we provide a possible code $P_i$ for the controller $\mathrm{PLC}_i$ managing the tank $T_i$ together with its physical devices (sensors and actuators). Here, the PLC waits for one time slot (to get stable sensor signals) and then checks the water level of the tank $T_i$, distinguishing between three possible states. 
	If the  water level is low (signal $l$) then the PLC sends a request of water to the sub-system $i{-}1$ via a channel transmission $\scriptstyle\overline{\req{on_{i-1}}}$, addressed to $\mathrm{PLC}_{i-1}$, requiring to turn on its pump. If the request is accepted then  $\mathrm{PLC}_i$ listens at both channels $\req{on_i}$ and $\req{off_i}$ for water requests coming from $\mathrm{PLC}_{i+1}$. Depending on whether these requests are accepted or not,  $\mathrm{PLC}_i$ will turn on/off  its pump (via commands $\overline{\on{ }}$ or $\overline{\off{ }}$, respectively), close the valve (via command $\overline\close$), and it will end its scan cycle. If there are no incoming requests from $\mathrm{PLC}_{i+1}$ then  $\mathrm{PLC}_{i}$ times out, closes the valve, and then ends the scan cycle. Similarly, if the water request $\scriptstyle\overline{\req{on_{i-1}}}$ is not accepted  by  $\mathrm{PLC}_{i-1}$ in the current time slot then it times out, closes the valve, and finally  ends its scan cycle.
\begin{table*}[t]
	
	\begin{displaymath}
	{\small
	\begin{array}{l}
	{P}_i	   \defn \fix \Xrec\big( \tick.   
	
	\timeout{l.\timeout{\overline{\req{on_{i-1}}}.\timeout{\req{on_{i}}.\overline{\on{} }.\overline{\close}.\fineC.{\Xrec}+ \req{off_{i}}.\overline{\off{}}.\overline{\close}.\fineC.{\Xrec}}{(\overline\close.\fineC.{\Xrec})}}{(\overline\close.\fineC.{\Xrec})}
		\\[1pt]
		\hspace*{10mm}
		+ \,
                	h.\timeout{\overline{\req{off_{i-1}}}.\timeout{\req{on_{i}}.\overline{\on{} }.\overline{\open}.\fineC.{\Xrec}+ \req{off_{i}}.\overline{\off{}}.\overline{\open}.\fineC.{\Xrec}}{(\overline\open.\fineC.{\Xrec})}}{(\overline\open.\fineC.{\Xrec})}}{(\fineC.{\Xrec})} \big )
		\\[1pt]
		\hspace*{10mm}			
		+ \, 
		m.\timeout{\req{on_{i}}.\overline{\on{} }.\fineC.{\Xrec}+ \req{off_{i}}.\overline{\off{}}.\fineC.{\Xrec}}{(\fineC.{\Xrec})}
	
	\end{array}
}
	\end{displaymath}
	\caption{The code of $\mathrm{PLC_i}$}
	\label{table:plc-i-code}
\end{table*}

If the water level of the tank  $T_i$ is high (signal $h$) then the behaviour of $\mathrm{PLC}_i$ is  specular to the previous case when a low level is detected (signal $l$).


	Finally, if the water of tank $T_i$ is at some intermediate level  between $l$ and $h$ (signal $m$) then $\mathrm{PLC}_i$ listens for water requests originating from $\mathrm{PLC}_{i+1}$ to turn on/off the pump. If it gets one of those requests in the current time slot then it reacts accordingly, otherwise it times out and ends the scan cycle.  More precisely, if $\mathrm{PCL}_i$ gets a $\req{on_i}$ request then it turns on the pump,  letting the water  flow from $T_i$ to $T_{i+1}$; otherwise, if it gets a $\req{off_i}$ request then 
	it turns off the pump; in both cases it ends the scan cycle and then returns.

Now, as our calculus can be also used to describe malicious code, in the following we provide a malware $M_i$ for  $\mathrm{PLC}_i$,  whose target is to empty the water tank $T_i$.  	
\begin{displaymath}
{ 
	\begin{array}{l}
	{M}_i	   =\fix \Xrec\big( \tick.\timeout{\overline{\req{off_{i-1}}}\timeout{\close.\Xrec}{\Xrec}
	}{\Xrec}\big)
	\end{array}
}
\end{displaymath}
The malware $M_i$ has a cyclic behaviour: it waits for one time slot and then sends a request at channel $\overline{\req{off_{i-1}}}$ to turn off the pump of the system $i{-}1$, pumping water  from $T_{i-1}$ to $T_i$; if the request is accepted then it drop the commands sent by $\mathrm{PLC}_i$  to close the valve, to completely empty the tank $T_i$. In this manner, the tank  $T_i$ will not
receive incoming water from the contiguous system $i{-}1$ and, at the same time, the closure of the valve of  $T_i$ is prevented; as consequence, the tank will finally get empty. 

\section{Monitor synthesis}
\label{sec:safety}


In Table~\ref{tab:synthesis}, we provide a synthesis function $\funEdit{-}$ that given a deterministic controller $P \in \mathbbm{Ctrl}$ returns a \emph{syntactically deterministic} edit automaton $\Edit \in \mathbbm{Edit}$ enforcing the correct behaviour of $P$, independently of the presence of an arbitrary malware $M \in  \mathbbm{Malw}$  that attempts to  \emph{inject} and/or \emph{drop} both \emph{actuator commands} and  \emph{channel communications}. 

 \begin{table}[t]
   \caption{The synthesis algorithm $\funEdit{-}$}
  {\small
 \begin{math}   
\begin{array}{lcl}
\funEdit{ \fix \Xrec S } & \defn & \fix \Xrec \funEdit{S} \\[5pt]

 \funEdit{\timeout{\sum_{i \in I}s_i.S_i}{S}} & \defn & \fix \Yrec \big(\sum_{i \in I}\eact{s_i}{s_i}.\funEdit{S_i}  + \eact{\tick}{\tick}.\funEdit{S}
+ \sum\limits_{\subalign{\scriptscriptstyle \alpha \in \ActSet^{\ast} \cup \ChanSet^{\ast}}}\eact{\alpha}{\tau}.\Yrec
\big) \\[5pt]

 \funEdit{\tick.S} & \defn & \fix \Yrec\big(\eact{\tick}{\tick}.\funEdit{S} 
+ \sum\limits_{\subalign{\scriptscriptstyle\alpha \in \ActSet^{\ast} \cup \ChanSet^{\ast}}}\eact{\alpha}{\tau}.\Yrec
\big) \\[5pt]

 \funEdit{\timeout{\sum_{i \in I}c_i.C_i}{C}} & 
\defn & \fix \Yrec\big(\sum_{i \in I} \eact{c_i}{c_i}.\funEdit{C_i} 
+ \eact{\tick}{\tick}.\funEdit{C}
+ \!\!\sum\limits_{\subalign{\scriptscriptstyle \alpha \in \ActSet^{\ast}}}\! \eact{\alpha}{\tau}.\Yrec 
+ \!\!\sum\limits_{\subalign{{\scriptscriptstyle \gamma\in \ChanSet^{\ast}  \setminus  \cup_{i \in I} \{ c_i \} }}} \! \eact{\gamma}{\tau}.\Yrec 
\big) \\[5pt]

 \funEdit{\timeout{\overline{c}.C_1}{C_2}} & \defn & \fix \Yrec\big(
\eact{\overline{c}}{\overline{c}}.\funEdit{C_1} + \eact{\tick}{\tick}.\funEdit{C_2} 
+ \sum\limits_{\subalign{\scriptscriptstyle \alpha \in \ActSet^{\ast}}}\eact{\alpha}{\tau}.\Yrec 
+ \sum\limits_{\subalign{\scriptscriptstyle \gamma\in \ChanSet^{\ast} \setminus \{ \overline{c} \}}} \eact{\gamma}{\tau}.\Yrec\big) \\[5pt]

 \funEdit{\overline{a}.A} & \defn & \fix \Yrec\big(\eact{\overline{a}}{\overline{a}}.\funEdit{A} + \eact{\tau}{\tau}.\Yrec 
+ \sum\limits_{\subalign{\scriptscriptstyle \alpha \in \ActSet^{\ast} \setminus\{ a, \overline{a}\}}}\eact{\alpha}{\tau}.\Yrec
+ \sum\limits_{\subalign{\scriptscriptstyle \gamma \in \ChanSet^{\ast}}}\eact{\gamma}{\tau}.\Yrec
\big)
\\[5pt]

 \funEdit{\fineC.\Xrec} & \defn & \fix \Yrec\big(\eact{\fineC}{\fineC}.\Xrec + \sum\limits_{\subalign{\scriptscriptstyle \alpha \in \ActSet^{\ast} \cup \ChanSet^{\ast}}}\eact{\alpha}{\tau}.\Yrec 
\big)
\end{array}
\end{math}
  }
    \label{tab:synthesis}
\end{table}

In the definition of our synthesis, we adopt the following standard notation for co-actions regarding actuator commands and channel communications: $\overline{\ActSet} \defn \{ \overline{a} \mid a \in \ActSet \}$ and  
$\overline{\ChanSet} \defn \{ \overline{c} \mid c \in \ChanSet \}$. Furthermore, we define $\ActSet^{\ast} \defn \ActSet \cup \overline{\ActSet}$ and $\ChanSet^{\ast} \defn \ChanSet \cup \overline{\ChanSet}$. 

 Let us comment on the details of the synthesis function $\funEdit{-}$ of  Table~\ref{tab:synthesis}. The edit automaton associated to  listening on sensor signals allows all incoming signals expected by the controller, together with the passage of time due to eventual timeouts. All other actions are suppressed. The edit automaton associated to the listening on communication channels is similar, except that  communications that are not admitted by the controller are suppressed to prevent both \emph{drops and injections on system channels}, as well as, covert communications between \emph{colluding malware} running in different PLCs. Channel transmissions are allowed only when occurring, in the right order, on those channels intended by the controller; all other actions are suppressed. Only genuine actuator commands (again, in the right order) are allowed. \emph{Drops of actuator commands}, the only possible intra-controller interaction occurring between the genuine controller and the malware, are allowed because we want an observation-based monitoring.  Finally, the monitoring edit automaton \nolinebreak and the associated controller do synchronise at the end of each controller cycle via the \nolinebreak action $\fineC$: all other actions emitted by the compromised controller are suppressed, included those actions coming from the genuine controller that was left behind in its execution due to some injection attack  mimicking (part of) some correct behaviour. We recall that only the construct $\fineC.\Xrec$ (and not $\fineC.P$) is  used to write PLC programs.

 As an example, in Table~\ref{tab:example} we provide the edit automaton resulting from our synthesis algorithm applied to the PLC introduced in our use case. For simplicity, with a small abuse of notation, we used parametric processes. 
 \begin{table}[t]
   \caption{Edit automaton synthesised from the code $P_i$ of PLC$_i$ of Section~\ref{sec:case-study}}
   \begin{displaymath}
{\small 
	\begin{array}{rcl}
	\funEdit{
		P_i} & \defn & \fix \Xrec  \fix \Yrec \big(\eact{\tick}{\tick}. \mathrm{ChkLvl}
	+ \sum\limits_{\subalign{\alpha \in \ActSet^{\ast} \cup \ChanSet^{\ast}  }}\eact{\alpha}{\tau}.\Yrec  
	\big) \\
	
	\mathrm{ChkLvl}  &
	\defn & \fix \Yrec \big( \eact{l}{l}.\mathrm{Req}_{l} + 
 \eact{h}{h}.\mathrm{Req}_{h} 
 +
 	\eact{m}{m}. \mathrm{Req}_{m} 
	+ \eact{\tick}{\tick}.\mathrm{End} 
	+ \!\!\!\sum\limits_{\subalign{\alpha \in \ActSet^{\ast} \cup \ChanSet^{\ast} }}\eact{\alpha}{\tau}.\Yrec  
	\big) \\
	
	\mathrm{Req}_{l} & 
	\defn & \fix \Yrec\big( \eact{\overline{\req{on_{i-1}}}}{\overline{\req{on_{i-1}}}}.\mathrm{C}\langle {{\close }}\rangle 
	+ \eact{\tick}{\tick}. \mathrm{A}\langle{{\close}}\rangle
	+ \sum\limits_{\subalign{\alpha \in  \ActSet^{\ast} \cup \ChanSet^{\ast} \setminus\{\scriptscriptstyle \overline{\req{on_{i-1}}}\}}}\eact{\alpha}{\tau}.\Yrec  
	\big) \\

	\mathrm{Req}_{h} & 
	\defn & \fix \Yrec\big( \eact{\overline{\req{off_{i-1}}}}{\overline{\req{off_{i-1}}}}.\mathrm{C}\langle {{\open }}\rangle 
	+ \eact{\tick}{\tick}. \mathrm{A}\langle{{\open}}\rangle
	+ \sum\limits_{\subalign{\alpha \in  \ActSet^{\ast} \cup \ChanSet^{\ast} \setminus\{\scriptscriptstyle \overline{\req{off_{i-1}}}\}}}\eact{\alpha}{\tau}.\Yrec  
	\big) \\

        	\mathrm{Req}_{m} & 
	\defn & \fix \Yrec\big( \eact{\req{on_{i}}}{\req{on_{i}}}.\mathrm{A}\langle {\on{}} \rangle
	+
	\eact{\req{off_{i}}}{\req{off_{i}}}.\mathrm{A}\langle {\off{}} \rangle 
	+ \eact{\tick}{\tick}.\mathrm{End}
	+ \sum\limits_{\subalign{\alpha \in  \ActSet^{\ast} \cup \ChanSet^{\ast} \setminus\{\scriptscriptstyle \req{on_{i}},\req{off_{i}}\}}}\eact{\alpha}{\tau}.\Yrec  
	\big) \\

	\mathrm{A}(a)  & \defn & \fix \Yrec  \big(\eact{\overline{a}}{\overline{a}}. \mathrm{End}
	+ \eact{\tau}{\tau}.\Yrec
	+ \sum\limits_{\subalign{\scriptscriptstyle \alpha \in \ActSet^{\ast} \setminus\{ a, \overline{a}\}}}\eact{\alpha}{\tau}.\Yrec
	+ \sum\limits_{\subalign{\scriptscriptstyle \gamma \in \ChanSet^{\ast}}}\eact{\gamma}{\tau}.\Yrec  
	\big) \\

        	\mathrm{C}(a) & 
	\defn & \fix \Yrec\big( \eact{\req{on_{i}}}{\req{on_{i}}}.\mathrm{On}\langle{a}\rangle
	+
	\eact{\req{off_{i}}}{\req{off_{i}}}.\mathrm{Off}\langle{a}\rangle 
	+ \eact{\tick}{\tick}.\mathrm{A}\langle{a}\rangle
	+ \sum\limits_{\subalign{\alpha \in  \ActSet^{\ast} \cup \ChanSet^{\ast} \setminus\{\scriptscriptstyle \req{on_{i}},\req{off_{i}}\}}}\eact{\alpha}{\tau}.\Yrec  
	\big) \\

	\mathrm{On}(a) & \defn & \fix \Yrec  \big(\eact{\overline{\on{}}}{\overline{\on{}}}. \mathrm{A}\langle {a}\rangle
	+ \eact{\tau}{\tau}.\Yrec
	+ \sum\limits_{\subalign{\scriptscriptstyle \alpha \in \ActSet^{\ast} \setminus\{ \on{}, \overline{\on{}}\}}}\eact{\alpha}{\tau}.\Yrec
	+ \sum\limits_{\subalign{\scriptscriptstyle \gamma \in \ChanSet^{\ast}}}\eact{\gamma}{\tau}.\Yrec  
	\big) \\
	
	\mathrm{Off}(a)  & \defn & \fix \Yrec  \big(\eact{\overline{\off{}}}{\overline{\off{}}}. \mathrm{A}\langle {a}\rangle
	+ \eact{\tau}{\tau}.\Yrec
	+ \sum\limits_{\subalign{\scriptscriptstyle \alpha \in \ActSet^{\ast} \setminus\{ \off{}, \overline{\off{}}\}}}\eact{\alpha}{\tau}.\Yrec
	+ \sum\limits_{\subalign{\scriptscriptstyle \gamma \in \ChanSet^{\ast}}}\eact{\gamma}{\tau}.\Yrec  
	\big) \\

	\mathrm{End}  & \defn & \fix \Yrec\big(\eact{\fineC}{\fineC}.\Xrec 
	+ \sum\limits_{\subalign{\alpha \in \ActSet^{\ast} \cup  \ChanSet^{\ast} }}\eact{\alpha}{\tau}.\Yrec  
	\big) 
	\end{array}
}
   \end{displaymath}
   \label{tab:example}
 \end{table}

Before proving the technical properties of the enforcement mechanism induced by our synthesised monitors, we focus our attention on two easy observations. 
\begin{remark}[Observation-based monitoring]
Our monitoring  is  \emph{observation-based} as the edit automata resulting from our synthesis never  correct $\tau$-actions (\emph{i.e.}, non-observable actions). 
\end{remark}

\begin{remark}[Colluding malicious activities]
\label{rem:colluding}
Any  inter-controller activity which does not comply  with the genuine behaviour of the PLC under scrutiny is suppressed by the enforcement.
\end{remark}

The synthesis proposed in Table~\ref{tab:synthesis} is suitable for implementation. 
\begin{proposition}[Determinism preservation]
  \label{prop:deterministic}
  Let $P \in \mathbbm{Ctrl}$ be a deterministic controller.  The  automaton
  $\funEdit{P}$ is syntactically deterministic in the  \nolinebreak sense \nolinebreak of~\cite{Acetoetal2017}.
\end{proposition}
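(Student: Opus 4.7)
The plan is to proceed by structural induction on the controller $P \in \mathbbm{Ctrl}$, following the clauses of the synthesis function $\funEdit{-}$ in Table~\ref{tab:synthesis}. Syntactic determinism in the sense of~\cite{Acetoetal2017} amounts to requiring that in every sum of guarded terms appearing in $\funEdit{P}$, the trigger labels $\alpha_i$ of the prefixes $\eact{\alpha_i}{\beta_i}$ are pairwise distinct (so that the next state is uniquely determined by the input action), and that the recursion variables are used linearly. Both conditions are compositional, so it suffices to check each clause of Table~\ref{tab:synthesis} individually and to appeal to the inductive hypotheses for the recursive calls.

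First I would handle the base-like clauses $\funEdit{\fix \Xrec S}$, $\funEdit{\tick.S}$, $\funEdit{\overline{a}.A}$ and $\funEdit{\fineC.\Xrec}$: for each of these, the generated sum is explicitly partitioned into a single genuine prefix (on $\tick$, $\overline{a}$ or $\fineC$, respectively), optionally a $\eact{\tau}{\tau}$ self-loop when the genuine action is observable, and then a family of suppression prefixes ranging over $\ActSet^{\ast} \cup \ChanSet^{\ast}$ minus precisely the actions already consumed by the genuine branch. Disjointness of the trigger labels is immediate by construction, and the unique recursion variable $\Yrec$ occurs only at the tail of the suppression prefixes. Then I would lift this reasoning to the inductive clauses $\funEdit{\timeout{\overline{c}.C_1}{C_2}}$ and the two read-with-timeout clauses $\funEdit{\timeout{\sum_{i \in I} s_i.S_i}{S}}$ and $\funEdit{\timeout{\sum_{i\in I} c_i.C_i}{C}}$, where the genuine branches consist of the $s_i$'s (or $c_i$'s), the timing-out $\tick$, and a suppression family over $\ActSet^{\ast} \cup \ChanSet^{\ast}$ from which exactly those $s_i$'s (or $c_i$'s) have been removed.

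The key point, and the place where I would invoke the determinism hypothesis on $P$, is the sensor/channel reading case: determinism of $P$ ensures that the indices $s_i$ (resp.\ $c_i$) for $i \in I$ are pairwise distinct, so the genuine prefixes themselves have pairwise distinct triggers. Combined with the fact that $\tick \notin \ActSet^{\ast} \cup \ChanSet^{\ast}$ and that the suppression range is set-subtracted against precisely the genuine triggers, this yields pairwise disjointness of all triggers in the synthesised sum. The inductive hypothesis applied to $\funEdit{S_i}$, $\funEdit{S}$, $\funEdit{C_i}$, $\funEdit{C}$, $\funEdit{C_1}$, $\funEdit{C_2}$ and $\funEdit{A}$ gives syntactic determinism of the continuations, so the overall automaton remains syntactically deterministic.

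The main obstacle, and the only point requiring a little care, is bookkeeping of the set-theoretic complements in the suppression families: I would make it explicit that for each clause the genuine triggers lie outside the complemented set being summed over, so that no trigger occurs twice. Once this is verified clause by clause, and once we observe that each $\fix \Yrec(\cdots)$ introduces a fresh bound variable occurring only in the suppression tails, a straightforward concluding argument yields that $\funEdit{P}$ is syntactically deterministic in the sense of~\cite{Acetoetal2017}.
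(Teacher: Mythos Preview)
Your structural-induction argument is correct and is the natural way to establish the result. The paper, however, does not actually supply a proof of this proposition: it is stated in Section~\ref{sec:safety} and the appendix contains proofs of Propositions~\ref{prop:poly}, \ref{prop:safety}, \ref{prop:completeness}, \ref{cor:false-negatives}, \ref{cor:weak-enforcement}, \ref{cor:false-negatives-2} and of Theorem~\ref{thm:s-enf-ctrl}, but not of Proposition~\ref{prop:deterministic}. The authors evidently regard it as immediate from inspection of Table~\ref{tab:synthesis}, so your write-up goes beyond what the paper provides.

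Two small imprecisions in your description, neither of which affects soundness. First, in the sensor-reading clause the suppression family ranges over $\ActSet^{\ast}\cup\ChanSet^{\ast}$ \emph{without} any subtraction of the $s_i$'s; no subtraction is needed because $\SensSet$ is disjoint from $\ActSet^{\ast}\cup\ChanSet^{\ast}$, so the sensor triggers cannot clash with the suppression triggers in the first place. Your phrase ``from which exactly those $s_i$'s \ldots\ have been removed'' is therefore inaccurate for the sensor case (though correct for the channel-reading case). Second, the $\eact{\tau}{\tau}$ self-loop appears \emph{only} in the clause $\funEdit{\overline{a}.A}$, not ``when the genuine action is observable'' in general; the $\tick$ and $\fineC$ clauses have no such loop. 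Again this is just a wording issue: the trigger sets remain pairwise disjoint either way. With these two points corrected, your clause-by-clause verification goes through exactly as you outline.
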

Furthermore,  our synthesis algorithm is  computationally feasible. The complexity of the synthesis  is quadratic on the dimension of the controller, 
 where, intuitively, the dimension of a controller $ P \in \mathbbm{Ctrl}$,  
written $\ctrlDim{P}$, is  given by the number of prefixes $\alpha \in \overline{\ActSet}\cup\ChanSet^{\ast}\cup\SensSet \cup \{\tick, \fineC\}$ occurring in it (its formal definition can be found in the appendix).

\begin{proposition}[Polynomial complexity]
  \label{prop:poly}
  Let $P \in \mathbbm{Ctrl}$ be a deterministic controller, the complexity
  to synthesise $\funEdit{P}$ is $\mathcal{O}(n^2)$, with $n=\ctrlDim{P}$.  
\end{proposition}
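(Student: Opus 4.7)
The plan is to proceed by structural induction on the syntactic form of the deterministic controller $P$, following precisely the clause structure of the synthesis function $\funEdit{-}$ given in Table~\ref{tab:synthesis}. The key observation is that the size of the synthesised edit automaton already witnesses the claimed bound: in each non-trivial clause the algorithm produces a constant number of ``allowed'' branches together with at most two summations of the form $\sum_{\alpha\in\ActSet^{\ast}\cup\ChanSet^{\ast}}\eact{\alpha}{\tau}.\Yrec$ (or restrictions thereof). Since the complexity of constructing $\funEdit{P}$ is proportional to the size of the output term, it suffices to show that $|\funEdit{P}|=\mathcal{O}(n^2)$, where $n=\ctrlDim{P}$.

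First I would fix a measure of the relevant action alphabet and observe that $|\ActSet^{\ast}|+|\ChanSet^{\ast}|+|\SensSet|=\mathcal{O}(n)$: indeed, it is harmless to assume that $\ActSet$, $\ChanSet$ and $\SensSet$ are restricted to the sensor signals, channels and actuator names that actually occur in $P$ (these are the only ones the monitor can ever need to inspect or suppress), and each such name contributes at least one prefix to $\ctrlDim{P}$. Second, I would give an inductive bound $|\funEdit{P}|\le k\cdot n\cdot |P|_{\mathit{pref}}$, where $|P|_{\mathit{pref}}$ counts the number of prefix positions in $P$ and $k$ is a suitable constant: each clause of Table~\ref{tab:synthesis} introduces one fresh $\fix\Yrec(\ldots)$ block whose body is bounded in size by $\mathcal{O}(n)$ (the summation over the alphabet), plus a recursive call on a strictly smaller subterm. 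Summing over the at most $n$ prefix positions of $P$ yields the desired $\mathcal{O}(n^2)$ bound on the size of the output, and hence on the construction time.

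Finally, I would argue that each individual clause of the synthesis can be evaluated in time linear in the number of branches it generates: enumerating the elements of $\ActSet^{\ast}\cup\ChanSet^{\ast}$ (respectively of its complement with a fixed finite set) takes $\mathcal{O}(n)$ steps, and the other syntactic operations (copying the head prefix, emitting the $\fix$/$\sum$ structure, dispatching on the case of $P$) are $\mathcal{O}(1)$. Combining this with the above inductive count of prefix positions gives the overall bound $\mathcal{O}(n^2)$.

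The main obstacle I anticipate is making rigorous the assumption that the global alphabets may be taken of size $\mathcal{O}(n)$: one has to justify that suppression transitions labeled with actions that never appear in $P$ or in any reachable malware interaction are operationally irrelevant, so the synthesis can be implemented with respect to the finite alphabet induced by $P$ without affecting the statement of Equation~\ref{eq_intro} and the detection/correction properties of Definition~\ref{def:detection}. Once this finiteness assumption is stated precisely, the inductive accounting above is routine.
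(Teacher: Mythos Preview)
Your approach is essentially the same as the paper's: both arguments perform a structural case analysis along the clauses of Table~\ref{tab:synthesis} and observe that every clause contributes $\mathcal{O}(n)$ work (the summation over $\ActSet^{\ast}\cup\ChanSet^{\ast}$ or a restriction thereof) plus recursive calls on strictly smaller subterms. The paper packages this via the recurrence $T(m)=T(m-1)+n$, whereas you phrase it as ``$\mathcal{O}(n)$ prefix positions, each costing $\mathcal{O}(n)$''; these are the same computation.

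One remark on the obstacle you flag. The paper simply asserts, in the representative case $\timeout{\sum_{i}c_i.C_i}{C}$, that the sum over $\ActSet^{\ast}\cup\ChanSet^{\ast}\setminus\bigcup_i\{c_i\}$ amounts to ``at most $n$ operations'', i.e.\ it takes the bound $|\ActSet^{\ast}|+|\ChanSet^{\ast}|\le n$ for granted without the semantic justification you propose. So your worry is legitimate, but you should not expect to have to prove that suppressions of unused names are operationally irrelevant: for the purposes of this proposition it suffices to adopt the convention that the alphabets are the finite sets of names occurring in $P$, which is exactly what the paper does implicitly.
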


As required at the beginning of this section, the synthesised edit automata are always \emph{transparent}, \emph{i.e.}, they never introduce non-genuine behaviours. 
\begin{proposition}[Transparency]
\label{prop:transparency}
If $P \in \mathbbm{Ctrl} $ then 
\(
\confCPS{\funEdit{P}}{P} \, \approx \, \confCPS{\go}{P}  . 
\)
\end{proposition}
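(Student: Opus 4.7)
The plan is to exhibit a strong bisimulation (which entails $\approx$) relating, state-by-state, the monitored configuration with the permissive one as the controller evolves. Since there is no malware, rule \rulename{Enforce} is the only way $\confCPS{\funEdit{P}}{P}$ can move, and it requires the edit automaton to synchronise on every genuine action of the underlying controller. Let $\mathcal{S}(P)$ be the set of all syntactic states $Z \in \mathbbm{Ctrl}\cup\mathbbm{Sens}\cup\mathbbm{Comm}\cup\mathbbm{Act}$ reachable from $P$ by the genuine LTS of Table~\ref{tab:sem-ctrl}, and let
\[
\mathcal{R} \;=\; \bigl\{\,\bigl(\confCPS{\funEdit{Z}}{Z},\; \confCPS{\go}{Z}\bigr)\;:\; Z\in \mathcal{S}(P)\,\bigr\}\,.
\]
I would then verify that $\mathcal{R}$ is a strong bisimulation, from which weak bisimilarity follows.

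\paragraph{Matching transitions.}
For the forward direction, suppose $\confCPS{\go}{Z}\trans{\alpha}\confCPS{\go}{Z'}$. By rule \rulename{Go} and \rulename{Enforce} this amounts to $Z\trans{\alpha}Z'$, with $Z'\in\mathcal{S}(P)$. The key lemma is then: \emph{for every $Z\in\mathcal{S}(P)$ and every $Z\trans{\alpha}Z'$, the synthesised automaton admits $\funEdit{Z}\trans{\eact{\alpha}{\alpha}}\funEdit{Z'}$.} This is a straightforward case analysis on the shape of $Z$ against Table~\ref{tab:synthesis}: in each clause of the synthesis, the set of $\eact{\alpha}{\alpha}$-summands precisely mirrors the genuine action set of $Z$ (sensor signals $s_i$ in sensing, $\tick$ for timeouts, $c_i$/$\overline{c}$ for the communication phase, $\overline{a}$ for actuation, and $\fineC$ at cycle end). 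Hence rule \rulename{Enforce} delivers the required matching $\confCPS{\funEdit{Z}}{Z}\trans{\alpha}\confCPS{\funEdit{Z'}}{Z'}$, which is again in $\mathcal{R}$.

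\paragraph{No spurious suppressions.}
For the converse direction, suppose $\confCPS{\funEdit{Z}}{Z}\trans{\beta}\confCPS{\Edit'}{Z'}$ via \rulename{Enforce}, so there exist $\alpha$ and $Z\trans{\alpha}Z'$ with $\funEdit{Z}\trans{\eact{\alpha}{\beta}}\Edit'$. Inspecting Table~\ref{tab:synthesis}, each automaton offers two kinds of outgoing transitions: honest ones of the shape $\eact{\alpha}{\alpha}.\funEdit{Z'}$ and suppression self-loops $\eact{\alpha}{\tau}.\Yrec$; crucially, the suppression clauses range only over actions that $Z$ cannot itself perform in its current state (the summations are explicitly indexed by $\alpha\in \ActSet^{\ast}\cup\ChanSet^{\ast}$ disjoint from the genuine moves available in that state). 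Since the underlying $Z$ is malware-free, it can only fire the admissible actions, so the only $\eact{\alpha}{\beta}$-transitions of $\funEdit{Z}$ triggered via \rulename{Enforce} are the honest ones with $\beta=\alpha$ and $\Edit'=\funEdit{Z'}$. Hence the matched pair lies in $\mathcal{R}$.

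\paragraph{Main obstacle.}
The delicate point is recursion: the clauses $\funEdit{\fix\Xrec S}\defn\fix\Xrec\funEdit{S}$ and $\funEdit{\fineC.\Xrec}\defn \fix\Yrec(\eact{\fineC}{\fineC}.\Xrec+\cdots)$ interact when a scan cycle closes and the controller unfolds via rule \rulename{Rec}, producing a runtime residual $\fineC.P'$ on the controller side and a corresponding substitution on the automaton side. I would prove a small syntactic commutation lemma ensuring that, whenever $\fix\Xrec S$ unfolds to $S\{\fix\Xrec S/\Xrec\}$, the automaton $\funEdit{\fix\Xrec S}$ unfolds compatibly to $\funEdit{S\{\fix\Xrec S/\Xrec\}}$, so that the bisimulation invariant is preserved across the $\fineC$-transition that re-enters the cycle. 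With this lemma in hand, the case analysis above suffices and transparency follows.
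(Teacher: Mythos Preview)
The paper does not actually supply a proof of Proposition~\ref{prop:transparency} in the appendix; it is treated as immediate and is used (via transitivity of $\approx$) in deriving Theorems~\ref{thm:w-enf-ctrl} and~\ref{thm:s-enf-ctrl}. Your approach is correct and is exactly in the spirit of the proofs the paper \emph{does} give for the adjacent results: for Propositions~\ref{prop:safety} and~\ref{prop:completeness} and Theorem~\ref{thm:s-enf-ctrl} the paper builds the candidate relation as a union $\mathcal{P}\cup\mathcal{S}\cup\mathcal{C}\cup\mathcal{A}$ indexed by the four syntactic phases of a controller, and then does a case analysis on the current shape of $Z$---precisely what your key lemma does.

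Two small remarks. First, your claim that the suppression summands never fire in the malware-free case is the right observation, and it is worth making explicit that a genuine controller never performs $\tau$ (Table~\ref{tab:sem-ctrl} has no $\tau$-rule), so the $\eact{\tau}{\tau}$ summand in $\funEdit{\overline{a}.A}$ is also inert; this is what makes your relation a \emph{strong} bisimulation rather than merely a weak one. Second, the recursion issue you flag is genuine but benign: the paper, in the analogous case of Proposition~\ref{prop:safety} (case $A\equiv\fineC.\Xrec$), dispatches it informally by appealing to rule \rulename{Rec} to pass from $\fineC.\Xrec$ to the unfolded $\fineC.P$ and relying on $\funEdit{\fix\Xrec S}=\fix\Xrec\funEdit{S}$ so that the automaton unfolds in lockstep. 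Your proposed commutation lemma $\funEdit{S\{P/\Xrec\}}=\funEdit{S}\{\funEdit{P}/\Xrec\}$ is the clean way to make this rigorous, and it holds by a routine induction on $S$ since $\Xrec$ can only occur under $\fineC$.
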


%

Furthermore, our enforcement enjoys  
\emph{soundness preservation}: in a monitored controller, a malware may never trigger an incorrect behaviour.

\begin{proposition}[Soundness]
\label{prop:safety}
\hspace*{-2mm} Let $P$ be an arbitrary controller and  $M$ be  an arbitrary malware. Then, 
$\confCPS{\funEdit{P}}{ P | M} \: \sqsubseteq \: \confCPS{\funEdit{P}}{P}$. 
\end{proposition}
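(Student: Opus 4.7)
The plan is to exhibit a weak simulation $\mathcal{R}$ witnessing the inequality. The guiding intuition, drawn directly from Table~\ref{tab:synthesis}, is that the monitor's state after any trace is (up to unfolding of $\fix$) of the shape $\funEdit{Y}$ for some state $Y$ of the controller syntax, and that $\funEdit{Y}$ permits precisely the next observable moves of $Y$, rewriting every other label to $\tau$ and looping back via the innermost $\fix \Yrec$. Accordingly, I would take as candidate
\[
\mathcal{R} \;=\; \bigl\{\bigl(\confCPS{\funEdit{Y}}{Z \,|\, M'},\;\confCPS{\funEdit{Y}}{Y}\bigr) \,:\, Y \text{ and } Z \text{ controller states}, \; M' \in \mathbbm{Malw}\bigr\} \;\cup\; \mathrm{Id}\,.
\]
The initial pair $(\confCPS{\funEdit{P}}{P\,|\,M}, \confCPS{\funEdit{P}}{P})$ lies in $\mathcal{R}$ by taking $Y=Z=P$.

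The body of the proof is a case analysis on a transition $\confCPS{\funEdit{Y}}{Z\,|\,M'} \trans{\alpha} N$ arising via rule \rulename{Enforce} together with Tables~\ref{tab:comp-Ctrl} and~\ref{tab:sem-malware}. Case (i): the monitor permits $\alpha$ with no rewriting; by inspection of the synthesis this can happen only when $\alpha$ is a next observable move of $Y$, so $Y \trans{\alpha} Y'$ and the monitor continues as $\funEdit{Y'}$, while the right matches strongly via $\confCPS{\funEdit{Y}}{Y} \trans{\alpha} \confCPS{\funEdit{Y'}}{Y'}$. Case (ii): the monitor suppresses an injected or unexpected label to $\tau$, looping back to $\funEdit{Y}$ via the $\Yrec$ catch-all, so the right matches with zero moves. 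Case (iii): rule \rulename{DropAct} yields $\tau$ on the left, absorbed by the monitor's $\eact{\tau}{\tau}$ self-loop, which is present exactly in the actuator clauses of Table~\ref{tab:synthesis}; the right again matches with zero moves. Case (iv): a $\tick$ transition fires via \rulename{TimePar}, which forces $Y$ to offer $\tick$ as well, since $\funEdit{Y}$ admits $\eact{\tick}{\tick}$ only at those $Y$ which themselves admit $\tick$ in Table~\ref{tab:sem-ctrl}, and the right matches with its own $\tick$. In every case, the resulting pair has the same shape and thus lies in $\mathcal{R}$.

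The main obstacle is tracking the possible drift between the \emph{inner} controller $Z$ and the \emph{tracked} state $Y$: successful injections may advance $Y$ past $Z$, while suppressed genuine moves may advance $Z$ past $Y$. What saves the argument is that this drift is always absorbed on the left as $\tau$-steps, since every observable action escaping the monitor must be permitted by $\funEdit{Y}$, hence is a genuine move of $Y$, which the right side---sitting exactly at $Y$---can replicate. Verifying that the monitor's LTS derivatives are always of the form $\funEdit{Y'}$ (up to $\fix \Xrec$ unfolding) requires a routine induction on the clauses of Table~\ref{tab:synthesis}, together with the observation that the suppression branches use the inner $\Yrec$ while the genuine branches point to subterms of the form $\funEdit{\cdot}$. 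With these observations established, $\mathcal{R}$ is a weak simulation and the inequality $\confCPS{\funEdit{P}}{P\,|\,M} \sqsubseteq \confCPS{\funEdit{P}}{P}$ follows.
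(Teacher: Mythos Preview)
Your proposal is correct and follows essentially the same approach as the paper. The paper builds the candidate relation as a union of four pieces $\mathcal{P}\cup\mathcal{S}\cup\mathcal{C}\cup\mathcal{A}$, one per syntactic category of the tracked controller state, and organises the case analysis by the structure of that state; you use a single relation indexed by an arbitrary controller state $Y$ and organise the case analysis by the type of transition (permitted, suppressed, \rulename{DropAct}, $\tick$). Both decompositions amount to the same verification, since in each of the paper's cases the argument ultimately reduces to exactly your four-way split on how \rulename{Enforce} is instantiated.
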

%

In the next proposition, we provide a result that is somehow complementary to Proposition~\ref{prop:safety}. The intuition being that in a monitored controller $ \confCPS{\funEdit{P}}{ P | M}$ the controller $P$ \emph{may}  execute all its (genuine) execution traces even in the presence of an arbitrary malware $M$.  Said in other words, the controller $P$ has always a chance to follow (and complete) its correct execution, even when compromised by the presence of a malware $M$.

\begin{proposition}
\label{prop:completeness}
Let $P$ be an arbitrary controller and $M$ be an arbitrary malware. 
Then, 
$
  \confCPS{\funEdit{P}}{ P | M} \: \sqsupseteq \: \confCPS{\funEdit{P}}{P}$. 
\end{proposition}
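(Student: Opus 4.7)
The plan is to exhibit a weak simulation $\WeakRel$ with $\confCPS{\funEdit{P}}{P|M}$ on the simulating side and $\confCPS{\funEdit{P}}{P}$ on the simulated side; coinduction will then yield the claim. The candidate is
\[
\WeakRel \defn \bigl\{\bigl(\confCPS{E}{Z|M},\, \confCPS{E}{Z}\bigr) : Z \in \mathbbm{Ctrl} \cup \mathbbm{Sens} \cup \mathbbm{Comm} \cup \mathbbm{Act}, \; E \in \mathbbm{Edit}, \; M \in \mathbbm{Malw}\bigr\},
\]
which certainly contains the pair $(\confCPS{\funEdit{P}}{P|M},\confCPS{\funEdit{P}}{P})$. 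The guiding intuition is that the genuine controller and the monitor state are syntactically identical on the two sides, so any move of the right-hand side can be replayed on the left provided the malware does not prevent the underlying controller transition from firing.

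To verify the simulation clause, suppose $\confCPS{E}{Z} \trans{\gamma} \confCPS{E'}{Z'}$. Rule \rulename{Enforce} is the only applicable one, and forces the decomposition $Z \trans{\alpha} Z'$ together with $E \trans{\eact{\alpha}{\gamma}} E'$. In the subcase $\alpha \neq \tick$ I invoke \rulename{Ctrl} to lift the genuine step to $Z|M \trans{\alpha} Z'|M$, and then reuse the \emph{same} monitor step under \rulename{Enforce} to obtain the strong match $\confCPS{E}{Z|M} \trans{\gamma} \confCPS{E'}{Z'|M}$, whose target is again in $\WeakRel$.

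The delicate subcase is $\alpha = \tick$. Inspecting the synthesis of Table~\ref{tab:synthesis} shows that the only edit productions involving $\tick$ are of the form $\eact{\tick}{\tick}$, so $\gamma = \tick$ and the applicable rule on the left becomes \rulename{TimePar}. This is where the \emph{main obstacle} sits: I need an auxiliary lemma stating that every closed malware term admits a $\tick$-transition, i.e., that for every $M \in \mathbbm{Malw}$ there exists $M'$ with $M \trans{\tick} M'$. I would prove it by structural induction on $M$, with base cases $\nil$, $\tick.M$, and $\timeout{\sum_{i \in I}\mu_i.M_i}{M'}$ ticking directly via \rulename{TimeNil}, \rulename{TimeM}, and \rulename{TimeoutM} respectively, and with the recursive case $\fix \Xrec M$ handled by unfolding, where time-guardedness ensures that a $\tick$-prefix appears before any re-entry of the binder (so the inductive measure decreases on the unguarded prefix). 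Granted this lemma, \rulename{TimePar} produces $Z|M \trans{\tick} Z'|M'$, and a last application of \rulename{Enforce} delivers $\confCPS{E}{Z|M} \trans{\tick} \confCPS{E'}{Z'|M'}$, again in $\WeakRel$. This completes the simulation check and hence the proof.
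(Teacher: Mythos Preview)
Your proof is correct and follows essentially the same approach as the paper: exhibit a weak simulation pairing $\confCPS{E}{Z}$ with $\confCPS{E}{Z|M}$, then observe that any controller step lifts to the compromised controller via \rulename{Ctrl} for untimed actions and via \rulename{TimePar} for $\tick$, the latter relying on the fact that time-guarded malware can always perform $\tick$. One small wobble: your candidate $\WeakRel$ quantifies over \emph{arbitrary} edit automata $E$, yet in the $\tick$ case you appeal to the shape of the synthesised automata from Table~\ref{tab:synthesis} to conclude $\gamma=\tick$; this inference is unwarranted for general $E$, but also unnecessary---the \rulename{Enforce} step goes through for whatever $\gamma$ labels $E\trans{\eact{\tick}{\gamma}}E'$, so either drop that observation or, as the paper does, restrict the relation to monitors of the form $\funEdit{Z}$.
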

%


By applications of Propositions~\ref{prop:transparency}, \ref{prop:safety}, 
and \ref{prop:completeness} we can summarise our enforcement in a single equation.
\begin{theorem}[Trace enforcement]
\label{thm:w-enf-ctrl}
Let $P \in \mathbbm{Ctrl}$ be an arbitrary controller and  $M \in \mathbbm{Malw}$ be an arbitrary malware. 
Then, 
\(
 \confCPS{\funEdit{P}}{P | M} \: \simeq \: \confCPS{\go}{P} \, . 
\)
\end{theorem}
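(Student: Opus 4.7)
The plan is to derive the theorem directly from the three preceding propositions by chaining them together via transitivity, after first reducing the relevant behavioural preorders/equivalences to their trace-theoretic consequences.

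First, I would recall the standard hierarchy between the behavioural relations used: weak bisimilarity $\approx$ implies trace equivalence $\simeq$, and weak similarity $\sqsubseteq$ (resp.\ $\sqsupseteq$) implies trace inclusion in the appropriate direction. These implications hold because every (weak) bisimulation is in particular a (weak) simulation, and simulations are known to preserve executable traces. I would state this as a brief lemma (or simply invoke it as folklore) so that the three propositions can be read as trace-level statements.

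Next, I would apply Proposition~\ref{prop:safety} to obtain trace inclusion
\(
\confCPS{\funEdit{P}}{P | M} \, \subseteq_{\mathrm{tr}} \, \confCPS{\funEdit{P}}{P}
\),
and Proposition~\ref{prop:completeness} to obtain the reverse inclusion
\(
\confCPS{\funEdit{P}}{P} \, \subseteq_{\mathrm{tr}} \, \confCPS{\funEdit{P}}{P | M}
\).
Combining these yields trace equivalence
\(
\confCPS{\funEdit{P}}{P | M} \simeq \confCPS{\funEdit{P}}{P}
\).

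Finally, Proposition~\ref{prop:transparency} gives
\(
\confCPS{\funEdit{P}}{P} \approx \confCPS{\go}{P}
\),
which downgrades to
\(
\confCPS{\funEdit{P}}{P} \simeq \confCPS{\go}{P}
\).
Transitivity of $\simeq$ then delivers the desired equation
\(
\confCPS{\funEdit{P}}{P | M} \simeq \confCPS{\go}{P}
\).
There is no real obstacle here: all the semantic work has been absorbed into the preceding propositions, so the only delicate point is justifying the passage from $\sqsubseteq$/$\sqsupseteq$/$\approx$ to trace inclusion/equivalence on the labelled transition system induced by Tables~\ref{tab:sem-ctrl}--\ref{tab:sem-net}. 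Since these LTSs are image-finite (sums are indexed by finite sets and our monitors are syntactically deterministic by Proposition~\ref{prop:deterministic}), this passage is standard and requires only a straightforward inductive argument on the length of the trace, coupled with the definition of weak (bi)similarity.
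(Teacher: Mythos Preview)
Your proposal is correct and follows exactly the route the paper takes: the theorem is obtained by combining Propositions~\ref{prop:transparency}, \ref{prop:safety} and \ref{prop:completeness} via the standard implications $\approx \,\Rightarrow\, \simeq$ and $\sqsubseteq \,\Rightarrow$ trace inclusion, together with transitivity of $\simeq$. The only superfluous bit is the final paragraph: the passage from weak (bi)similarity to trace inclusion/equivalence holds for arbitrary LTSs and does not rely on image-finiteness or on Proposition~\ref{prop:deterministic}, so that justification can simply be dropped.
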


	An easy consequence of trace enforcement is the absence of \emph{false negatives}. 
	\begin{proposition}[Anomaly detection]
			\label{cor:false-negatives}
			Let $P$ be an arbitrary  controller,  $M$ be  an arbitrary malware, $J=P |M$, and $t$ be a genuine trace of $P$ (\emph{i.e.},  $P \trans{t}Z$, for some $Z$). If    $J \trans{t  \alpha}J' $, for some  $\alpha$ and $J'$, but $P \trans{ t \alpha} Z$ for no $Z$,   then  
			$\funEdit{P}$ promptly detects an anomaly of $J$  when  trying the execution of   $\alpha$ in  the trace \nolinebreak 
			$t  \alpha$. 
		\end{proposition}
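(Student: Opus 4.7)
My plan is to verify the two operational conditions required by Definition~\ref{def:detection} by leveraging the trace enforcement result of Theorem~\ref{thm:w-enf-ctrl}. A preliminary structural observation is needed: by inspection of Table~\ref{tab:sem-ctrl} a pure controller $P$ never produces a $\tau$-transition, so rule \rulename{Enforce} applied with the trivial automaton $\go$ (which always contributes $\alpha / \alpha$) gives that $\confCPS{\go}{P} \trans{u}$ holds if and only if $P \trans{u}$, for every label sequence $u$; in particular, strong and observable traces of $\confCPS{\go}{P}$ coincide.

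For the first requirement (allowance of $t$ by $\funEdit{P}$), I use the hypothesis $P \trans{t} Z$ to derive $\confCPS{\go}{P} \trans{t} \confCPS{\go}{Z}$ by iterated application of rules \rulename{Go} and \rulename{Enforce}. Combined with Theorem~\ref{thm:w-enf-ctrl}, this yields $\confCPS{\funEdit{P}}{J} \trans{t} N$ for some $N$. For the second requirement (rejection of $t\alpha$ by $\funEdit{P}$), I reason by contradiction: if $\confCPS{\funEdit{P}}{J} \trans{t\alpha} N$ for some $N$, then trace equivalence forces $\confCPS{\go}{P}$ to exhibit the same trace $t\alpha$, and by the $\tau$-freeness observation this would force $P \trans{t\alpha} Z'$ for some $Z'$, contradicting the hypothesis that $P \trans{t\alpha} Z$ holds for no $Z$.

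The main subtlety will be cleanly reconciling the strong-trace notation $\trans{t}$ used in Definition~\ref{def:detection} with the trace equivalence $\simeq$ of Theorem~\ref{thm:w-enf-ctrl}: the bridge is precisely the $\tau$-freeness of genuine controllers, which ensures that the observable actions of $t$ are matched exactly by strong transitions on the $\confCPS{\go}{P}$ side, with any residual silent steps harmlessly absorbable. The implicit requirement of Definition~\ref{def:detection} that $\alpha$ be observable should also be briefly justified: the extra actions of $J$ beyond those of $P$ are exactly the observable injections produced by rule \rulename{Inject}, since the sole $\tau$-source in rule \rulename{DropAct} requires $P$ to initially offer a genuine action of its own, so any such $\tau$ arises only along a continuation that $P$ already admits and therefore cannot constitute a trace divergence in the sense of the proposition.
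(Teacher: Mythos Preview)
Your proposal is correct and follows essentially the same route as the paper: apply Theorem~\ref{thm:w-enf-ctrl} in both directions, first using $P \trans{t}$ (via rules \rulename{Go} and \rulename{Enforce}) to establish the first clause of Definition~\ref{def:detection}, and then arguing by contradiction from the hypothesis that $P \trans{t\alpha} Z$ holds for no $Z$ to establish the second. Your additional remarks on the $\tau$-freeness of pure controllers (to reconcile the strong-trace notation with the trace equivalence $\simeq$) and on the observability of $\alpha$ add welcome rigor that the paper's own proof leaves implicit.
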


Furthermore, trace enforcement scales to \emph{field communications networks} of communicating controllers compromised by  the presence of \emph{colluding malware}. 
\begin{proposition}[Trace enforcement of  field networks]
\label{cor:weak-enforcement}
Let $P_i \in \mathbbm{Ctrl}$  and  $M_i \in \mathbbm{Malw}$, for $1 \leq i \leq n$. 
Then, 
\begin{math}
\prod_{i=1}^n \confCPS{\funEdit{P_i}}{P_i  | M_i} \; \simeq \; \prod_{i=1}^n \confCPS{\go}{P_i}   \, . 
\end{math}
\end{proposition}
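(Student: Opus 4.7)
The plan is to lift the single-controller result of Theorem~\ref{thm:w-enf-ctrl} to an arbitrary parallel composition of monitored controllers by establishing a congruence property of trace equivalence with respect to the network-level parallel operator \(\parallel\) defined in Table~\ref{tab:sem-net}. Once such a compositionality lemma is in place, a straightforward induction on the number \(n\) of components, using Theorem~\ref{thm:w-enf-ctrl} in each inductive step, delivers the required equation.

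The key compositionality lemma I would state is: whenever \(N_1 \simeq N_1'\) and \(N_2 \simeq N_2'\), we also have \(N_1 \parallel N_2 \simeq N_1' \parallel N_2'\). The proof proceeds by induction on the length of a trace \(t\) of \(N_1 \parallel N_2\), analysing which rule of Table~\ref{tab:sem-net} fires first. For \rulename{ParL}/\rulename{ParR}, the action is local to one component and can be replayed on the equivalent side by the inductive hypothesis. For \rulename{ChnSync}, the emitted \(\tau\) is witnessed by matching per-component visible actions \(c\) and \(\overline{c}\); each such action is in the trace set of its component, hence also in the trace set of its trace-equivalent counterpart, so the same \rulename{ChnSync} step can be reproduced on the right-hand side. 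Finally, \rulename{TimeSync} requires a \(\tick\)-step in both components together with the side condition \(N_1 \parallel N_2 \ntrans{\tau}\); here we use that \(\simeq\) equates trace sets, so a \(\tick\)-prefixed trace on the left lifts to a \(\tick\)-prefixed trace on the right.

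Having the lemma, I would apply it inductively: by Theorem~\ref{thm:w-enf-ctrl}, \(\confCPS{\funEdit{P_i}}{P_i \mid M_i} \simeq \confCPS{\go}{P_i}\) for each \(i\); iterating the compositionality lemma on the associative operator \(\parallel\) yields the claim.

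The main obstacle will be the \rulename{TimeSync} rule, whose negative premise \(\ntrans{\tau}\) (encoding \emph{maximal progress}) is not a pure trace-level property. A monitored controller \(\confCPS{\funEdit{P_i}}{P_i \mid M_i}\) can perform \(\tau\)-actions that are absent on the right-hand side, coming either from suppressed colluding channel actions (Remark~\ref{rem:colluding}) or from dropped actuator commands via \rulename{DropAct}. In principle such extra \(\tau\)'s could block time progression on the left while the right can freely tick. The argument here is that these \(\tau\)'s are strictly \emph{local}, produced inside a single monitored component, and Theorem~\ref{thm:w-enf-ctrl} already reconciles them at the level of that component by stating \(\simeq\) (rather than \(\approx\)): any \(\tick\)-step feasible in \(\confCPS{\go}{P_i}\) must also appear, after a possibly longer prefix of \(\tau\)'s, in the trace set of \(\confCPS{\funEdit{P_i}}{P_i \mid M_i}\). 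Since time-guardedness of both controllers and malware excludes zeno behaviours, such \(\tau\)-prefixes are finite and can be consumed before the synchronised \(\tick\) fires; symmetrically, a \(\tick\) on the left can be matched on the right because the right simply has fewer \(\tau\)'s to consume. Verifying this carefully, rather than simply quoting a generic congruence theorem for CCS-style trace equivalence, is where the real work lies.
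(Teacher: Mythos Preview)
Your plan hinges on a compositionality lemma---``$N_1 \simeq N_1'$ and $N_2 \simeq N_2'$ imply $N_1 \parallel N_2 \simeq N_1' \parallel N_2'$''---that is \emph{false} in this calculus, and the paper says so explicitly in its own proof of the proposition: ``in our timed setting trace equivalence $\simeq$ is not preserved by parallel composition (the problem being the negative premise in the transition rule \rulename{TimeSync})''. Your attempt to rescue the lemma for the specific systems at hand does not close the gap. Trace equivalence is a \emph{linear-time} relation: knowing that two components have the same trace sets tells you nothing about what each component can or cannot do \emph{after} a shared prefix, which is precisely the branching information needed to discharge the negative premise $N_1 \parallel N_2 \ntrans{\tau}$. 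Concretely, after the malware has mimicked a genuine prefix (Remark~\ref{rem:liveness}), the monitored component and the genuine component can be in states that are trace-equivalent yet offer different immediate channel actions; this is enough to break the \rulename{TimeSync} side condition on one side but not the other. Your remark that the extra $\tau$'s are ``local'' and can be ``consumed'' does not help, because after consuming them you have no handle on the resulting state beyond trace equivalence again.

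The paper avoids this trap entirely by \emph{not} lifting Theorem~\ref{thm:w-enf-ctrl} directly. Instead it goes back to the stronger per-component results---Propositions~\ref{prop:transparency}, \ref{prop:safety} and~\ref{prop:completeness}---which give weak \emph{simulations} $\sqsubseteq$ and $\sqsupseteq$, not merely trace equivalence. Weak simulation \emph{is} preserved by the parallel operator of Table~\ref{tab:sem-net}, so one obtains
\[
\prod_{i=1}^n \confCPS{\funEdit{P_i}}{P_i | M_i} \; \sqsubseteq \; \prod_{i=1}^n \confCPS{\go}{P_i}
\quad\text{and}\quad
\prod_{i=1}^n \confCPS{\funEdit{P_i}}{P_i | M_i} \; \sqsupseteq \; \prod_{i=1}^n \confCPS{\go}{P_i},
\]
and two-way simulation immediately yields $\simeq$. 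In short: do not try to push trace equivalence through $\parallel$; push the two simulations through instead, and recover trace equivalence only at the very end.
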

%

However, trace enforcement has a non-trivial inconvenient: it does not enjoy deadlock-freedom in the presence of a malware injecting correct actions. Let us formalise such a situation in the next remark.  
%
\begin{remark}[Injection attacks may prevent deadlock-freedom]
\label{rem:liveness}
In a monitored controller of the form $\confCPS{\funEdit{P}}{P | M}$, it may well  happen that the malware $M$  misleads the edit automaton $\funEdit{P}$ by injecting a trace $M \trans{\alpha_1} \ldots \trans{\alpha_n} M'$ of actions, with $\alpha_i \neq \tick$, compatible with the correct behaviour of the controller; in the sense that the very same trace may be executed by $P$: $P \trans{\alpha_1} \ldots \trans{\alpha_n} Q$, for some state $Q$.  This would give rise to the following admissible execution trace for the monitored controller: 
\( \confCPS{\funEdit{P}}{P |  M} \trans{\alpha_1} \ldots \trans{\alpha_n} \confCPS{\funEdit{Q}}{P | M'} \), 
in which the actual controller $P$ remains inactive. At that point, if the malware $M'$ suddenly stops mimicking an admissible behaviour of the controller,
the edit automaton $\funEdit{Q}$  will suppress all possible actions, even those proposed by  $P$,  which was left behind in its execution. Thus,  the monitored controller will continue its evolution as follows: 
\( \confCPS{\funEdit{Q}}{P | M'} \trans{\tau} \ldots \trans{\tau} \confCPS{\funEdit{Q}}{P' | M''} \). 
In this case, as neither the controller nor the malware can give rise to zeno behaviours, the enforced system may eventually reach a \emph{deadlock state} in which (i) $P'=\fineC.\Xrec$, (ii) $M''=\tick.M'''$, for  some $M'''$, or $M''=\nil$, and (iii) $\funEdit{Q}$ does not allow $\fineC$-actions because it requires some actions (\emph{e.g.}, actuations) to be performed before ending the scan cycle.
\end{remark}
%
Notice that  Remark~\ref{rem:liveness} is not in contradiction with Proposition~\ref{prop:completeness} because in that proposition we proved that a controller  has  a chance to follow and complete its correct behaviour in the presence of an arbitrary malware. Here,  we say a different thing:  a  malware  has a chance to deadlock  our monitored controllers. 
%

\section{Mitigation: the recipe for deadlock-freedom}
\label{sec:liveness}
In this section, we introduce an extra transition rule for monitored controllers to implement 
\emph{mitigation}, \emph{i.e.}, the \emph{insertion} of a sequence of activities  driven by \nolinebreak the \nolinebreak edit automaton in full autonomy, when the controller has lost contact with its enforcer: 
\[{
\Txiom{Mitigation}
{  J \trans{\fineC}J' \Q  \Edit \trans{\eact{\alpha}{\alpha}}  \Edit' \Q \alpha \in \ChanSet^{\ast} \cup \overline{\ActSet} \cup \{ \tick \}} 
{  \confCPS {\Edit}{  \mathit{J}} \trans{\alpha}  \confCPS {\Edit'}{\mathit{J} }}
}
\]
Intuitively, if the compromised controller signals the end of the scan cycle by emitting the action $\fineC$ 
and, at the same time, the current edit automaton $\Edit$ is not in the same state, 
then $\Edit$ will command the execution of a safe trace, without any involvement of the  controller, to reach the end of the controller cycle. When both the controller and the edit automaton will be aligned (at the end of the cycle)  they will synchronise on the action $\fineC$, via an application of the transition rule \rulename{Enforce}, and from then on they will continue in a tethered fashion.


Notice that in a monitored controller  $\confCPS{\Edit}{J}$ where $J$ is corrupted by some malware, the two components   $\Edit$ and $J$ may get  misaligned as they may reach unrelated states. For instance, in case of drop of actuator commands the corrupted controller $J$  may  reach an incorrect  state, leaving behind  its monitoring edit automata $\Edit$. 
In this case, the remaining observable actions in the current cycle of the compromised controller will be suppressed until the controller reaches the end of the scan cycle, signalled by the emission of an  $\fineC$-action 
(notice that since our malware are time-guarded they cannot introduce zeno behaviours to prevent a controller to reach the end of its scan cycle).
Once the compromised controller has been driven to \nolinebreak  the end of its cycle, the transition rule \rulename{Mitigation} goes into action.
%
      \begin{remark}
          \label{rem:mitigation-time} 
        The assumption made in Remark~\ref{rem:maximum-time} ensures us enough time to complete the mitigation of the scan cycle, well before  the maximum cycle limit. 
\end{remark}

As a main result, we prove that with the introduction of the rule  \rulename{Mitigation} our runtime enforcement for controllers works faithfully up to weak bisimilarity, ensuring  deadlock-freedom. 
\begin{theorem}[Observational enforcement]
	\label{thm:s-enf-ctrl}
	Let $P \in \mathbbm{Ctrl}$ be an arbitrary controller and  $M \in \mathbbm{Malw}$ be an arbitrary malware. 
	Then, 
\(
\confCPS{\funEdit{P}}{P | M} \, \approx \, \confCPS{\go}{P} \, . 
\)
\end{theorem}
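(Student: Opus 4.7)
The plan is to strengthen Theorem~\ref{thm:w-enf-ctrl} from trace equivalence to weak bisimilarity by exhibiting an explicit weak bisimulation $\rel$ containing the pair $(\confCPS{\funEdit{P}}{P | M}, \confCPS{\go}{P})$. A natural candidate is
\[
\rel \defn \bigl\{ \bigl(\confCPS{\funEdit{Q}}{Z | M'},\; \confCPS{\go}{Q}\bigr) \;:\; Q \text{ and } Z \text{ reachable from } P,\; M' \in \mathbbm{Malw} \bigr\}
\]
closed under the identity on configurations of shape $(\confCPS{\go}{Q},\confCPS{\go}{Q})$. The initial configuration lies in $\rel$ by taking $Q=Z=P$ and $M'=M$. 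Note that $Z$ may drift out of sync with $Q$ in two ways: the malware can inject admissible actions that advance the monitor past $Z$, or drop actuator commands through \rulename{DropAct} that advance $Z$ past the monitor.

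For the left-to-right simulation, any transition $\confCPS{\funEdit{Q}}{Z | M'} \trans{\alpha} N$ fires via \rulename{Enforce} or \rulename{Mitigation}. If $\alpha=\tau$ (a suppression, or a drop absorbed by the $\eact{\tau}{\tau}$-loop of the actuation clause of Table~\ref{tab:synthesis}) the match on the right is the empty move and only $Z$ or $M'$ change in the resulting pair. Otherwise $\alpha$ is observable; by inspection of Table~\ref{tab:synthesis} each non-suppressing monitor move is of the form $\eact{\alpha}{\alpha}$ and advances the monitor to $\funEdit{Q''}$ with $Q \trans{\alpha} Q''$, so the match on the right is $\confCPS{\go}{Q} \trans{\alpha} \confCPS{\go}{Q''}$. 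In all cases the resulting pair lies in $\rel$.

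For the right-to-left simulation, given $\confCPS{\go}{Q} \trans{\alpha} \confCPS{\go}{Q'}$ (i.e.\ $Q \trans{\alpha} Q'$), one splits on whether $Z$ is aligned with $Q$: if so, $Z$ performs $\alpha$ and \rulename{Enforce} lifts it to the monitored controller. If not, one first drives $Z$ via a finite sequence of suppressed $\tau$-moves until $Z$ reaches a $\fineC$-emitting state; at that point \rulename{Mitigation} can fire any action in $\ChanSet^{\ast} \cup \overline{\ActSet} \cup \{\tick\}$ demanded by the monitor, matching $\alpha$ directly whenever $\alpha$ lies in that set. The sensor case $\alpha=s \in \mathsf{Sens}$ occurs only when $Q$ sits at a sensing state, that is, at the start of a fresh scan cycle; one then drives the current cycle of $Z$ to its $\fineC$ via mitigation, applies \rulename{Enforce} to the $\fineC$-synchronisation, and finally lets \rulename{Enforce} fire $s$ through $Z$'s new sensing phase. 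The $\alpha=\tick$ and $\alpha=\fineC$ subcases are handled analogously via mitigation and synchronisation respectively.

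The main obstacle is justifying that the mitigation machinery always succeeds in realigning $Z$ with $Q$ within the current scan cycle. This reduces to two auxiliary facts. First, a progress lemma for compromised controllers: since both the controller and the malware admit only time-guarded recursion (Remarks~\ref{rem:maximum-time} and~\ref{rem:mitigation-time}), any reachable $Z | M'$ arrives, via a bounded number of $\tau$- and $\tick$-steps, at a state enabling $\fineC$. Second, a re-synchronisation lemma: once $Z$ has emitted $\fineC$ in response to mitigation, firing $\fineC$ via \rulename{Enforce} collapses both sides to the initial shape $\confCPS{\funEdit{P}}{P | M''} \,\rel\, \confCPS{\go}{P}$. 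The remaining verifications are routine case analyses over the rules of Tables~\ref{tab:comp-Ctrl} and~\ref{tab:sem-net} and the clauses of Table~\ref{tab:synthesis}, in the same spirit as the proofs of Propositions~\ref{prop:safety} and~\ref{prop:completeness}.
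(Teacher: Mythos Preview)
Your overall strategy---exhibiting an explicit weak bisimulation and discharging the misaligned cases via rule \rulename{Mitigation}---is exactly the paper's. The only structural difference is that the paper relates $\confCPS{\funEdit{Q}}{J}$ to $\confCPS{\funEdit{Q}}{Q}$ (keeping the synthesised monitor on both sides) and then invokes Proposition~\ref{prop:transparency}, whereas you target $\confCPS{\go}{Q}$ directly; this is cosmetic.

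There is, however, a genuine gap in your candidate relation and in your sensor-case argument. As stated, $\rel$ admits pairs in which $Q$ sits at a sensing state $\timeout{\sum_i s_i.S_i}{S}$ while $Z$ is somewhere in communication or actuation (you only require $Q$ and $Z$ to be \emph{independently} reachable from $P$). Such a pair fails the right-to-left clause for $\alpha=s_j$: sensor signals can be produced only by the controller component---the malware cannot inject them, and \rulename{Mitigation} cannot emit them since $s_j\notin\ChanSet^{\ast}\cup\overline{\ActSet}\cup\{\tick\}$. Your proposed recovery (``drive $Z$ to $\fineC$, then apply \rulename{Enforce} to the $\fineC$-synchronisation'') does not type-check either, because $\funEdit{\timeout{\sum_i s_i.S_i}{S}}$ has no $\fineC$-labelled edge. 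The missing observation is that during the sensing phase the monitor and the controller cannot drift apart: the malware may only inject actions in $\ActSet^{\ast}\cup\ChanSet^{\ast}$, all of which are suppressed by the $\Yrec$-loops of $\funEdit{S}$ and $\funEdit{\tick.S}$, and \rulename{DropAct} is inapplicable. Misalignment therefore arises only once the monitor has entered the communication or actuation phase, where your suppression-then-\rulename{Mitigation} argument does go through (and $Z$, being in $\mathbbm{Comm}\cup\mathbbm{Act}$, always offers an untimed action for the monitor to suppress). Tighten $\rel$ to record this invariant---e.g.\ restrict to pairs reachable from the initial configuration, or require $Z=S$ whenever the monitor is $\funEdit{S}$---and the sensor case collapses to the trivial aligned case.
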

%

	A consequence of Theorem~\ref{thm:s-enf-ctrl} is  the prompt detection and mitigation of alterations of PLC behaviours in the presence of injected malware.

	\begin{proposition}[Anomaly detection and mitigation]
		\label{cor:false-negatives-2}
		Let $P \in \mathbbm{Ctrl}$ be an arbitrary controller,  $M \in \mathbbm{Malw}$ be  an arbitrary malware, and $J=P|M$ the derived compromised controller. 
		\begin{enumerate}
			\item  
			If    {\small $J \trans{t  \alpha}J'$} for some genuine trace $t$ of $P$ (\emph{i.e.}, {\small $P \trans{t} Z$}, for some $Z$), 
			for some \emph{observable action} $\alpha$,  but {\small $P \trans{ t \alpha} Z$} for no $Z$,   then the monitor $\transf{P}$ \emph{detects} an anomaly of $J$  when trying the execution of  the incorrect  action $\alpha$ of the trace $t  \alpha$. 
			\item 
			Whenever $\transf{P}$  detects an anomaly $\alpha$ \nolinebreak in  $J$, it \emph{mitigates} the anomaly either by 
			correcting the action \nolinebreak $\alpha$  with an action $\beta$, $\tau \neq \beta \neq\alpha$, or by suppressing \nolinebreak the \nolinebreak action \nolinebreak $\alpha$.
		\end{enumerate}
	\end{proposition}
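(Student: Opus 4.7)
The plan is to handle the two parts of the proposition separately, reducing detection to a consequence of Theorem~\ref{thm:s-enf-ctrl} (or equivalently Theorem~\ref{thm:w-enf-ctrl}) and mitigation to a direct inspection of the shape of the synthesised edit automaton produced by $\funEdit{-}$ in Table~\ref{tab:synthesis}.

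\textbf{Part (1) — Detection.} I would argue as in Proposition~\ref{cor:false-negatives}. By Theorem~\ref{thm:s-enf-ctrl} we have $\confCPS{\funEdit{P}}{P|M} \approx \confCPS{\go}{P}$, hence in particular trace equivalence. Since $t$ is a genuine trace of $P$, the monitored controller can follow it: $\confCPS{\funEdit{P}}{J} \trans{t} N$ for some $N$. On the other hand, if $P \trans{t\alpha} Z$ for no $Z$, then the trace $t\alpha$ is not a trace of $\confCPS{\go}{P}$, so by trace equivalence it is not a trace of $\confCPS{\funEdit{P}}{J}$ either: $\confCPS{\funEdit{P}}{J} \trans{t\alpha} N'$ for no $N'$. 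Meanwhile, since $J\trans{t\alpha}J'$ by hypothesis, the compromised controller does propose $\alpha$ after executing~$t$. By Definition~\ref{def:detection}, this is precisely the detection of an anomaly during~$\alpha$.

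\textbf{Part (2) — Mitigation.} Starting from the detection setup, we have $\confCPS{\funEdit{P}}{J} \trans{t} N$ with $N=\confCPS{\Edit'}{J''}$ for some reachable $\Edit'$ and $J''$; moreover $J'' \trans{\alpha} J'''$ for some $J'''$, because $\alpha$ is observable and the compromised-controller semantics of Table~\ref{tab:comp-Ctrl} preserves the $\alpha$-move inherited from $J$. A quick case analysis on the shape of $\Edit'$ (according to the clauses of Table~\ref{tab:synthesis}) shows that every reachable state of $\funEdit{P}$ carries suppression transitions $\eact{\gamma}{\tau}$ for every observable $\gamma \in \ActSet^\ast \cup \ChanSet^\ast$ that is not among the finitely many actions the monitor wants to allow or correct with a non-$\tau$ label at that point. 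Since, by the detection clause, $\alpha$ is \emph{not} one of the allowed actions, the transition $\eact{\alpha}{\tau}$ is enabled in $\Edit'$, so rule \rulename{Enforce} yields $\confCPS{\Edit'}{J''} \trans{\tau} \confCPS{\Edit''}{J'''}$, i.e.\ $\Edit'$ suppresses $\alpha$. This establishes mitigation by suppression; correction by a $\beta\neq\tau$ corresponds to the analogous case in which the synthesis happens to rewrite $\alpha$ into such a $\beta$, which some clauses of Table~\ref{tab:synthesis} can in principle do.

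\textbf{Main obstacle.} The delicate point is Part~(1): we must be sure that the trace $t$ remains executable from $\confCPS{\funEdit{P}}{J}$ even though $M$ is interleaving its own actions. This is where trace equivalence (Theorem~\ref{thm:w-enf-ctrl}) together with Proposition~\ref{prop:completeness} is essential — without completeness, a priori the monitored compromised controller might lose the ability to perform a genuine prefix $t$ because of internal interference with $M$. Once we have re-used these results, detection is immediate, and Part~(2) reduces to reading off from Table~\ref{tab:synthesis} that a catch-all suppression branch is present in every synthesised state, which is the routine (but essential) structural observation about $\funEdit{-}$.
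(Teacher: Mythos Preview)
Your Part~(1) is essentially the paper's own argument and is fine.

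Part~(2), however, has a genuine gap. Your direct structural argument claims that every reachable monitor state carries a suppression branch $\eact{\alpha}{\tau}$ for the anomalous $\alpha$, but this is only true when $\alpha\in\ActSet^{\ast}\cup\ChanSet^{\ast}$. Inspecting Table~\ref{tab:synthesis}, states such as $\funEdit{\overline{a}.A}$ or $\funEdit{\fineC.\Xrec}$ have no $\eact{\tick}{\tau}$ or $\eact{\fineC}{\tau}$ transitions, and no state suppresses sensor signals. For instance, if the malware injects an $\overline{c}$ that the monitor accepts as genuine, the controller component of $J''$ may lag at a comm state while the monitor is already at $\funEdit{\overline{a}.A}$; then $J''\trans{\tick}$ is an anomaly, yet $\funEdit{\overline{a}.A}$ has no $\eact{\tick}{\cdot}$ edge at all, so your rule~\rulename{Enforce} argument does not fire. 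Similarly, when $J''\trans{\fineC}$ while the monitor is elsewhere, suppression is unavailable. Your fallback sentence about correction ``when the synthesis happens to rewrite $\alpha$ into such a $\beta$'' is also incorrect: every clause of Table~\ref{tab:synthesis} produces only identity pairs $\eact{\gamma}{\gamma}$ or suppressions $\eact{\gamma}{\tau}$, never $\eact{\alpha}{\beta}$ with $\tau\neq\beta\neq\alpha$. The only source of genuine correction is the \rulename{Mitigation} rule, which you do not invoke.

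The paper avoids this case analysis entirely. It first proves a small lemma (Lemma~\ref{lem:false-negatives-2}) that closed controller terms never deadlock, so $Z\trans{\beta}\hat Z$ for some $\beta$ whenever $P\trans{t}Z$. Hence $\confCPS{\go}{P}\trans{t\beta}$, and by Theorem~\ref{thm:s-enf-ctrl} the monitored compromised controller does $t\beta$ as well; detection (Part~1) forces $\beta\neq\alpha$. One then simply observes that $\beta=\tau$ is suppression and $\beta\neq\tau$ is correction, the latter arising precisely via \rulename{Mitigation} when $J''\trans{\fineC}$. The missing ingredient in your attempt is this deadlock-freedom lemma: it is what guarantees some $\beta$ exists without having to exhibit a specific monitor edge for the particular anomalous $\alpha$.
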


As for trace enforcement, observational enforcement  scales to \emph{field networks} of communicating controllers compromised by  the presence of (potentially) \emph{colluding malware}. 
\begin{corollary}[Observational enforcement of  field networks]
  \label{cor:obs-enforcement}
Let $P_i \in \mathbbm{Ctrl}$  and  $M_i \in \mathbbm{Malw}$, for $1 \leq i \leq n$. 
Then, 
\begin{math} 
\prod_{i=1}^n \confCPS{\funEdit{P_i}}{P_i | M_i} \; \approx \; \prod_{i=1}^n \confCPS{\go}{P_i}   \, . 
\end{math}
\end{corollary}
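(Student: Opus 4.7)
The plan is to lift Theorem~\ref{thm:s-enf-ctrl} from single monitored controllers to field networks by establishing that weak bisimilarity $\approx$ is a congruence with respect to the parallel composition operator $\parallel$ of field networks. Once this congruence property is in hand, the corollary follows by a straightforward induction on $n$, applying Theorem~\ref{thm:s-enf-ctrl} component-wise to obtain $\confCPS{\funEdit{P_i}}{P_i | M_i} \approx \confCPS{\go}{P_i}$ for each $1 \le i \le n$, and then closing $n$ times under $\parallel$.

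The congruence proof itself would proceed by exhibiting the candidate relation
\[
\mathcal{R} \;=\; \{\,(N_1 \parallel N_2,\; N_1' \parallel N_2') \;:\; N_1 \approx N_1' \text{ and } N_2 \approx N_2'\,\}
\]
and verifying that $\mathcal{R}$ is a weak bisimulation by case analysis on the rules in Table~\ref{tab:sem-net}. The cases \rulename{ParL} and \rulename{ParR} reduce to a direct transfer of the matching weak transition from the component bisimulation. The case \rulename{ChnSync} is handled by combining the two individual weak transitions $N_i \Longrightarrow \trans{c_i} \Longrightarrow N_i^\star$ obtained from $N_i \approx N_i'$ into a single synchronised $\tau$-transition on the right-hand side.

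The main obstacle will be the case of \rulename{TimeSync}, because of the maximal-progress side condition $N_1 \parallel N_2 \ntrans{\tau}$. Here, given a $\tick$-move of the left-hand composition, I need not only to match each component's $\tick$-move via weak bisimilarity, but also to verify that the matching composition on the right, $N_1' \parallel N_2'$, likewise satisfies the maximal-progress premise. The key observation is that a $\tau$ on a composition arises either inside a component or from a \rulename{ChnSync} between the two components; in both situations, such a $\tau$ on the right-hand side can be reflected back, via $\approx$ on each component, into a matching weak $\tau$-trace on the left-hand side, contradicting $N_1 \parallel N_2 \ntrans{\tau}$. I expect to need an \emph{up-to} technique that treats weak $\tau$-closure symmetrically, so that ``no $\tau$ available'' is a bisimulation-invariant property at the network level.

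With this congruence in hand the corollary is immediate: chaining $n$ instances of the congruence closure with the pointwise bisimilarities produced by Theorem~\ref{thm:s-enf-ctrl} yields
\[
\prod_{i=1}^n \confCPS{\funEdit{P_i}}{P_i | M_i} \;\approx\; \prod_{i=1}^n \confCPS{\go}{P_i},
\]
as required. A mild check is that the semantics of $\parallel$ is genuinely associative and commutative up to $\approx$, so that the iterated product is well-defined independently of the bracketing chosen for the induction.
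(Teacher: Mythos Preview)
Your overall strategy—establish that $\approx$ is preserved by $\parallel$ and then apply Theorem~\ref{thm:s-enf-ctrl} componentwise—is the natural one, and indeed the paper treats the result as a corollary without spelling out a proof. You are also right that \rulename{TimeSync} is the only non-routine case.

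However, your proposed handling of \rulename{TimeSync} has a genuine gap. You argue that if $N_1' \parallel N_2'$ had a $\tau$-transition, this $\tau$ could be ``reflected back'' via $\approx$ to a $\tau$ of $N_1 \parallel N_2$, contradicting the premise $N_1 \parallel N_2 \ntrans{\tau}$. But weak bisimilarity does not support this reflection: from $N_1' \trans{\tau} N_1''$ and $N_1 \approx N_1'$ you only obtain $N_1 \Longrightarrow N_1^{\star}$ with $N_1^{\star} \approx N_1''$, and this weak move may well be the empty sequence. So $N_1' \trans{\tau}$ does not entail $N_1 \trans{\tau}$, and the contradiction never materialises. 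The same failure occurs in the \rulename{ChnSync} sub-case. In short, ``no $\tau$ available'' is \emph{not} a weak-bisimulation invariant, and no generic up-to technique will turn it into one; your sentence ``I expect to need an up-to technique'' is precisely where the argument is missing rather than merely deferred.

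The paper is aware of this obstacle: in the proof of the analogous trace-level result (Proposition~\ref{cor:weak-enforcement}) it explicitly remarks that the negative premise of \rulename{TimeSync} blocks a direct congruence argument, and it falls back on the one-sided preservation of weak \emph{simulation} $\sqsubseteq$ under $\parallel$. For the present corollary the paper gives no details. A rigorous argument would have to either (i) exploit specific structural properties of the monitored controllers produced by $\funEdit{-}$—for instance, that suppression-$\tau$'s can always be exhausted in finitely many steps before the next $\tick$, so that the right-hand composition can saturate its $\tau$'s and then fire \rulename{TimeSync}—or (ii) lift the concrete bisimulation relation constructed in the proof of Theorem~\ref{thm:s-enf-ctrl} directly to products, rather than invoking an abstract congruence lemma that does not hold in this timed setting.
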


Now, we show an easy application of Corollary~\ref{cor:obs-enforcement} to the (simplified) water transmission network seen in Section~\ref{sec:case-study}. We recall the synthesis in Table~\ref{tab:example} obtained from (the code of) $\mathrm{PLC_i}$.
\begin{proposition}
For any arbitrary malware $M_i$, for $1 \leq i \leq n$, 
\begin{displaymath}
{\small  \prod_{i=1}^{n}\confCPS{\funEdit{\mathrm{PLC}_i}}{\mathrm{PLC}_i | M_i } \, \approx \, \prod_{i=1}^{n}\confCPS{\go}{\mathrm{PLC}_i} \, . 
}
\end{displaymath}
\end{proposition}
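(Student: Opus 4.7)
The plan is to observe that this proposition is essentially a direct instantiation of Corollary~\ref{cor:obs-enforcement}, so the bulk of the proof consists in checking the syntactic preconditions that allow its application to the components of the water transmission network of Section~\ref{sec:case-study}.

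First, I would verify that each $\mathrm{PLC}_i$ of Table~\ref{table:plc-i-code} belongs to $\mathbbm{Ctrl}$. By inspection, $\mathrm{PLC}_i$ has the form $\fix \Xrec S$, where the sensing block $S$ is a timeout-guarded finite sum over sensor signals $l$, $h$, $m$; each branch then proceeds through a communication phase (channel outputs such as $\overline{\req{on_{i-1}}}$ or inputs $\req{on_i}$, $\req{off_i}$, all within a timeout), followed by an actuation phase emitting $\overline{\on{}}$, $\overline{\off{}}$, $\overline{\open}$, $\overline{\close}$, and terminating with $\fineC.\Xrec$. The recursion variable $\Xrec$ occurs only in the actuation phase and is guarded by the initial $\tick$, so the process is time-guarded in the sense required by the grammar.

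Second, I would verify that each $M_i$ is admissible, i.e., $M_i \in \mathbbm{Malw}$. The term $\fix\Xrec(\tick.\timeout{\overline{\req{off_{i-1}}}.\timeout{\close.\Xrec}{\Xrec}}{\Xrec})$ matches the grammar: the outer $\tick.(\cdot)$ is a time prefix, and inside we have timeout-guarded singleton sums whose prefixes $\overline{\req{off_{i-1}}} \in \overline{\ChanSet}$ and $\close \in \ActSet$ are both valid $\mu$-actions (channel transmission and actuator-command drop, respectively). Crucially, the recursion variable $\Xrec$ appears only under the leading $\tick$-prefix, so the malware is time-guarded and cannot produce zeno behaviours.

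Third, having confirmed $\mathrm{PLC}_i \in \mathbbm{Ctrl}$ and $M_i \in \mathbbm{Malw}$ for each $1 \leq i \leq n$, I would instantiate Corollary~\ref{cor:obs-enforcement} with $P_i \defn \mathrm{PLC}_i$ and the given $M_i$, which yields directly
\[
  \prod_{i=1}^{n}\confCPS{\funEdit{\mathrm{PLC}_i}}{\mathrm{PLC}_i | M_i} \; \approx \; \prod_{i=1}^{n}\confCPS{\go}{\mathrm{PLC}_i}\, ,
\]
as required. The only real obstacle here is the bookkeeping of the syntactic checks above; once they are in place, no bisimulation needs to be constructed ad hoc, because the heavy lifting, namely exhibiting the weak bisimulation relating an arbitrarily compromised monitored network to its genuine counterpart, has already been discharged in the proof of Corollary~\ref{cor:obs-enforcement} (which in turn rests on Theorem~\ref{thm:s-enf-ctrl} and therefore on Propositions~\ref{prop:transparency}, \ref{prop:safety}, \ref{prop:completeness}, together with the \rulename{Mitigation} rule of Section~\ref{sec:liveness}).
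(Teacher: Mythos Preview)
Your proposal is correct and matches the paper's intended argument: the proposition is stated explicitly as ``an easy application of Corollary~\ref{cor:obs-enforcement}'', so once $\mathrm{PLC}_i \in \mathbbm{Ctrl}$ is checked the result is immediate. One minor point: your second step, verifying that the concrete malware of Section~\ref{sec:case-study} lies in $\mathbbm{Malw}$, is unnecessary here since the proposition is quantified over \emph{arbitrary} $M_i \in \mathbbm{Malw}$; that check is only relevant for the subsequent remark in the paper that the result applies in particular to that example.
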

In particular, the proposition above holds  for the example of malware code proposed at the end of Section~\ref{sec:case-study}.

\section{Conclusions and related work}
\label{sec:conclusion}
We have defined a formal language to express  networks of monitored PLCs, potentially compromised with colluding malware that may forge/drop actuator commands and inter-controller communications. We do not deal with alterations of sensor signals  within a PLC, as they can already be  altered either at the network level or within the sensor devices~\cite{ACM-survey2018}.

The runtime enforcement has been achieved via a finite-state sub-class of Ligatti's edit automata equipped with an ad-hoc operational semantics to deal with \emph{system mitigation},  by inserting actions in full autonomy when the monitored controller is not able to do so in a correct manner.

Then, we have provided a synthesis algorithm that, given a deterministic uncorrupted controller, returns, in polynomial time, a syntactically deterministic  edit automata  to enforce the correctness of the controller. The proposed enforcement meets a number of requirements: observation-based monitoring,  transparency, soundness,  deadlock-freedom,  and both detection and mitigation of alterations of the behaviour of the monitored PLC in case of injected malware.

\paragraph*{Related work.}
	The notion of \emph{runtime enforcement} was introduced by Schneider~\cite{Schneider2000} to enforce security policies. These properties are enforced by means of  \emph{truncation automata}, a kind of automata that terminates the monitored system in case of violation of the property.
Thus, truncation automata can only enforce safety properties.
 Ligatti et al.~\cite{Ligatti2005} extended Schneider's work by proposing the notion of  \emph{edit automaton}, \emph{i.e.}, an enforcement mechanism able of \emph{replacing}, \emph{suppressing}, or even \emph{inserting} system actions. Edit automata are capable of enforcing instances of safety and liveness properties, along with other properties such as renewal properties~\cite{Bielova2011,Ligatti2005}. 
In general, Ligatti et al.'s edit automata have an enumerable number of states,  whereas in the current paper we restrict ourselves to finite-state edit automata. Furthermore, in its original definition the insertion of actions is possible at any moment, whereas our monitoring edit automata can insert actions, via the rule \rulename{Mitigation}, only when the PLC under scrutiny reaches a specific state, \emph{i.e.}, the end of the scan cycle. Notice that our actions of the form $\eact{\alpha}{\beta}$ can be easily expressed in the original formulation by inserting the action $\beta$ and then suppressing the action $\alpha$. 
Unlike Schneider and Ligatti et al., we do not enforce specific properties for all admissible systems (in our case, controllers) but we ensure the preservation of the correct semantics of a corrupted controller. 
Bielova and Massacci~\cite{Bielova2011,Bielova2011Jrn} 
 provided a stronger notion of enforceability by introducing a \emph{predictability} criterion to prevent monitors from transforming invalid executions in an arbitrary manner. Intuitively, a monitor is said predictable if one can predict the number of transformations used to correct invalid executions, thereby avoiding unnecessary transformations. In our case, we never introduce unnecessary transformations as our synthesis is based on the exact knowledge of the controller. 
	Falcone et al. \cite{Falcone2,Falcone3} proposed a synthesis algorithm, relying on \emph{Streett automata}, to translate most of the property classes defined within the \emph{safety-progress hierarchy}~\cite{Hierarchy_of_Temporal}
        into enforcers. 
K\"onighofer et al.~\cite{Konighofer2017} proposed a synthesis algorithm that given a safety property returns a monitor, called \emph{shield},  that analyses both inputs and outputs of reactive systems in order to enforce properties by modifying the outputs only.   Pinisetty et al.~\cite{Pinisetty_2017} have proposed a bi-directional runtime enforcement mechanism  for reactive systems, and more generally for cyber-physical systems, to correct both inputs and outputs. 
Aceto et al.~\cite{AcetoCC2018}  developed an operational framework to  enforce properties in HML logic with recursion ($\mu$HML) relying on suppression. More precisely, they achieved the  enforcement of  a  safety  fragment of $\mu$HML by providing a linear automated synthesis algorithm that generates correct suppression monitors from formulas.  Enforceability of  modal $\mu$-calculus (a reformulation of $\mu$HML) was previously tackled  by Martinelli and Matteucci~\cite{MartinelliMatteucci2007} by means of a synthesis algorithm which is exponential in the lenght of the enforceable formula.   More recently,  Cassar~\cite{Cassar_PhD} defined a general framework to compare different enforcement models and different correctness criteria, including optimality. His works focuses on the enforcement of a safety fragment of $\mu$HML, paying attention to both directional and bi-directional notions of enforcement. 

The present work is a revised extension of the extended abstract appeared in~\cite{ICTCS2020}. Here, besides full proofs, we provide new results on the anomaly detection and the mitigation activity of the monitoring secured proxy. In particular, as formally stated in Proposition~\ref{cor:false-negatives} and Proposition~\ref{cor:false-negatives-2}, our secured proxies promptly detect abnormal behaviours and safely intervene by mitigating them. Furthermore, compared to the conference paper, in Section~\ref{sec:case-study} we provide a more involved and realistic use case, taken from the field of water transmission networks. We then apply our synthesis algorithm to derive an enforcing edit automata for the family of PLCs proposed in our use case. Finally, in our companion paper~\cite{CSF2020} we abstracted over PLC implementations and provided a simple language of regular properties to express correctness properties that should be enforced upon completion of PLC scan cycles.

\bibliography{main}
\bibliographystyle{splncs04}

\appendix
\section{Proofs}

Before proving Proposition~\ref{prop:poly} we provide a formal definition of the size of a controller.
\begin{definition}
	For a generic controller $Z \in \mathbbm{Ctrl} \cup \mathbbm{Sens} \cup \mathbbm{Comm} \cup \mathbbm{Act} $, we define the size of $Z$, written  $\ctrlDim {Z}$,  by induction on the structure of the controller:
	{\small
		\begin{displaymath}
		\begin{array}{lcl@{\hspace*{1cm}}lcl}		
		\ctrlDim{ \fix \Xrec S } &\defn& \ctrlDim{ S }
		&\ctrlDim{\timeout{\sum_{i \in I}s_i.S_i}{S}} &\defn& | I | + \sum_{i \in I}\ctrlDim{S_i} + \ctrlDim{S}
		\\[2pt]
		
		\ctrlDim{\tick.S} &\defn &  1+\ctrlDim{S}  &\ctrlDim{\timeout{\sum_{i \in I}c_i.C_i}{C}} 
		&\defn&  | I | + \sum_{i \in I}\ctrlDim{C_i} + \ctrlDim{C}
		\\[2pt]

		\ctrlDim{\overline{a}.A} &\defn&  1 + \ctrlDim{A} &
		\ctrlDim{\timeout{\overline{c}.C_1}{C_2}} &\defn&  1 +\ctrlDim{C_1} + \ctrlDim{C_2} \\[2pt]
		\ctrlDim{\fineC.\Xrec} &\defn&    1 \, .
		\end{array}
		\end{displaymath}
	}	
\end{definition}

 \emph{Let us prove Proposition~\ref{prop:poly}.}

\begin{proof}
	For a generic controller $Z$, we prove that the recursive structure of the synthesis function $\funEdit{Z'}$ with $Z' \in \{P, S, C,A\}$ can be characterised by the following form: $T(m) = T(m-1) + n$ , with $n = \ctrlDim{Z}$ and $\ctrlDim{Z'}=m\leq n$ . Hence the thesis follows since $T(n) = T(n-1) + n$   is $\mathcal{O}(n^2)$.
	We prove this by case analysis on the structure of $Z$ by examining each synthesis step in which the synthesis function is processing $m = \ctrlDim{Z'}$ symbols, whit $m\leq n$ and $Z' \in \{P, S, C,A\}$. Thus, we characterise: \emph{(i)} how many symbols of $Z$ the synthesis functions processes, \emph{(ii)} how many times the synthesis function calls itself and \emph{(iii)} how many computations performs in that step. 
	We consider the most significant cases.
	

{\bf Case} $\timeout{\sum_{i \in I}c_i.C_i}{C}$. 
 For $m=\ctrlDim{\timeout{\sum_{i \in I}c_i.C_i}{C}}$, by definition the synthesis function consumes all $c_i$, with $i\in I$. The synthesis function calls itself again $r+1$ times where $r = |I|$. 
 Thus, each $\funEdit{C_i}$ operates on $\ctrlDim{C_i}$ remaining symbols and $\funEdit{C}$ operates on $\ctrlDim{C}$ remaining symbols. The synthesis function produces a sum over $\alpha \in \ActSet^{\ast} \cup \ChanSet^{\ast} \setminus \bigcup_{i \in I}c_i$ which are at most $n$ operations. 
 Thus, we can characterise the recursive structure as $T(m) =  \sum_{i \in I} T(  \ctrlDim{C_i}) + T(\ctrlDim{C})  + n $. Since $\sum_{i \in I} \ctrlDim{C_i} + \ctrlDim{C} = m-{|I|-1}\le m-1$, the complexity  is smaller than that  of $T(m-1)+n$.


{\bf Case } $\overline{a}.A$.  
For $m=\ctrlDim{\overline{a}.A}$, by definition the synthesis function consumes the, $\overline{a}$ and calls itself again once $\funEdit{A}$ Moreover, the synthesis function produces a sum over $\alpha \in \ActSet^{\ast} \cup \ChanSet^{\ast}  \cup \{\tau\} \setminus\{ a, \overline{a}\}$ which are at most $n$ operations. Thus we can characterise the recursive structure as: $T(m) = T(m-1) + n$. 

\end{proof}

\emph{Let us prove Proposition~\ref{prop:safety} (Soundness)}.
\begin{proof}
	Let us define four  binary relations:
	\begin{itemize}
		\item ${\mathcal P} \defn \{ ( \confCPS{\funEdit{P}}{J}, \confCPS{\funEdit{P}}{P}) \, \mid  \textrm{ for any $P$ and $J$} \} $; 
		\item ${\mathcal S} \defn \{ ( \confCPS{\funEdit{S}}{J}, \confCPS{\funEdit{S}}{S}) \, \mid  \textrm{ for any $S$ and $J$} \} $; 
		\item ${\mathcal C} \defn \{ (\confCPS{\funEdit{C}}{J} ,  \confCPS{\funEdit{C}}{C}) \, \mid  \textrm{ for any $C$ and $J$} \} $; 
		\item ${\mathcal A} \defn \{ ( \confCPS{\funEdit{A}}{J} ,  \confCPS{\funEdit{A}}{A} ) \, \mid  \textrm{ for any $A$ and $J$} \} $. 
	\end{itemize}
	We prove that the relation $\rel \defn {\mathcal P} \cup {\mathcal S} \cup {\mathcal C} \cup {\mathcal A}$ is a \emph{weak simulation}. For each pair $( N_1,  N_2 ) \in \rel$ we proceed by case analysis on why $N_1 \trans{\alpha} N_1'$. 
	We consider the most significant cases.

	\noindent
	{\bf Case} $(N_1,N_2) \in {\mathcal C}$.  We do case analysis on the structure of the controller $C$ in $N_1= \confCPS{\funEdit{C}}{J}$, for some arbitrary $J$.  
	
\noindent
 Let $C \equiv \timeout{\sum_{i} c_i.C_i}{\hat{C}}$. 
		\begin{itemize}
			\item Let $N_1 \trans{c_i} \confCPS{\funEdit{C_i}}{J'} = N'_1$, by an application rule \rulename{Enforce} as $J \trans{c_i} J'$. Then, $N_2=\confCPS{\funEdit{C}}{C} \trans{c_i} \confCPS{\funEdit{C_i}}{C_i} = N_2'$ and $(N_1',N_2') \in {\mathcal C} \subset \rel$ by construction. 
			
			\item Let $N_1 \trans{\tick} \confCPS{\funEdit{\hat{C}}}{J'} = N'_1$, by an application rule \rulename{Enforce} as $J \trans{\tick} J'$. Then, $N_2 = \confCPS{\funEdit{C}}{C} \trans{\tick} \confCPS{\funEdit{\hat{C}}}{\hat{C}} = N'_2$ and $(N_1',N_2') \in {\mathcal C} \subset \rel$ by construction.
			
			\item  Let $N_1 \trans{\tau} \confCPS{\funEdit{C}}{J'} = N_1'$, by an application rule \rulename{Enforce}, because $J \trans{\alpha} J'$, for some $\alpha \in \ActSet^{\ast}\cup \ChanSet^{\ast} \setminus  \bigcup_i c_i$.   Here, notice that the edit automaton $\funEdit{C}$ \emph{suppresses} all  possible injections originating from the malware, turning them into $\tau$-actions. 
			Thus, $N_2 = \confCPS{\funEdit{C}}{C} \Trans{\hat{\tau}} \confCPS{\funEdit{C}}{C} = N_2'$ and $(N_1',N_2') \in {\mathcal C} \subset \rel$ by construction.
		\end{itemize}
		
\noindent
Let $C \equiv \timeout{\overline{c}.C_1}{C_2}$.  This case is similar to the previous one.
	 
\noindent  Let $C \equiv A$. In this case, we end up to the case  $(N_1,N_2) \in {\mathcal A}$.

	\noindent
	{\bf Case} $(N_1,N_2) \in {\mathcal A}$.  We do case analysis on the structure of the controller $A$ in $N_1= \confCPS{\funEdit{A}}{J}$, for some arbitrary $J$.   

\noindent  Let $A \equiv \overline{a}.\hat{A}$.  
		\begin{itemize}
			\item Let $N_1 \trans{\overline{a}} \confCPS{\funEdit{\hat{A}}}{J'} = N'_1$, by an application of \rulename{Enforce} as $J \trans{\overline{a}} J'$ Then, $N_2=\confCPS{\funEdit{A}}{A} \trans{\overline{a}} \confCPS{\funEdit{\hat{A}}}{\hat{A}} = N_2'$ and $(N_1',N_2') \in {\mathcal A} \subset \rel$ by construction. 
			
			\item  Let $N_1 \trans{\tau} \confCPS{\funEdit{\hat{A}}}{J'} = N_1'$, by an application rule \rulename{Enforce}, because $J \trans{\alpha} J'$, for some $\alpha \in \{\tau\}  \cup  \ActSet^{\ast}\cup \ChanSet^{\ast} \setminus  \{\overline{a}\}$. Note that the edit automaton $\funEdit{A}$ \emph{suppresses} all  possible injections originating from the malware, turning them into $\tau$-actions. 
			Thus, $N_2 = \confCPS{\funEdit{\hat{A}}}{\hat{A}} \Trans{\hat{\tau}} \confCPS{\funEdit{\hat{A}}}{\hat{A}} = N_2'$ and $(N_1',N_2') \in {\mathcal A} \subset \rel$ by construction.
		\end{itemize}
		
\noindent 
 Let $A \equiv \fineC.\Xrec$.

		\begin{itemize}
			\item Let $N_1 \trans{\fineC} \confCPS{\funEdit{S}}{J'} = N'$, by an application rule \rulename{Enforce} triggered by \rulename{Ctrl}. By definition of the transition rule \rulename{Rec} we end up from $\fineC.\Xrec$ to the case  $\fineC.S$. Then, $N_2=\confCPS{\funEdit{A}}{A} \trans{\fineC} \confCPS{\funEdit{S}}{S} = N_2'$ and $(N_1',N_2') \in {\mathcal S} \subset \rel$ by construction.
			
			\item Let $N_1 \trans{\tau} \confCPS{\funEdit{A}}{J'} = N_1'$, by an application rule \rulename{Enforce} triggered by an application of rule \rulename{Inject} or \rulename{Ctrl}, because $J \trans{\alpha} J'$ for some $\alpha \in \{a, \overline{a}, c, \overline{c}\}$.   Here, notice that the edit automaton $\funEdit{S}$ \emph{suppresses} all  possible injections originating from the malware or the controller not aligned with the edit, turning them into $\tau$-actions. 
			Thus, $N_2=\confCPS{\funEdit{A}}{A} \Trans{\hat{\tau}} \confCPS{\funEdit{A}}{A} = N_2'$ and $(N_1,N_2') \in {\mathcal A} \subset \rel$ by construction.
			
		\end{itemize}
		
\end{proof}
%

\emph{Let us prove Proposition~\ref{prop:completeness}}.
\begin{proof}
	Let us define fours binary relations:
	\begin{itemize}
		\item ${\mathcal P} \defn \{ ( \confCPS{\funEdit{P}}{P} , \confCPS{\funEdit{P}}{P | M}) \, \mid  \textrm{ for any $P$ and $M$} \} $; 
		\item ${\mathcal S} \defn \{ ( \confCPS{\funEdit{S}}{S} , \confCPS{\funEdit{S}}{S | M}) \, \mid  \textrm{ for any $S$ and $M$} \} $; 
		\item ${\mathcal C} \defn \{ (\confCPS{\funEdit{C}}{C}, \confCPS{\funEdit{C}}{C |  M} ) \, \mid  \textrm{ for any $C$ and $M$} \} $; 
		\item ${\mathcal A} \defn \{ (  \confCPS{\funEdit{A}}{A} , \confCPS{\funEdit{A}}{A | M} ) \, \mid  \textrm{ for any $A$ and $M$} \} $. 
	\end{itemize}
	It is enough to prove that the relation $\rel \defn {\mathcal P} \cup {\mathcal S} \cup {\mathcal C} \cup {\mathcal A}$ is a \emph{weak simulation}. For each pair $( N_1,  N_2 ) \in \rel$, the proof proceeds by case analysis on why $N_1 \trans{\alpha} N_1'$. 
	The proofs relies on two main facts: 
	(i) in a compromised controller $Z | M$ the malware $M$ cannot prevent the execution of actions that $Z$ would execute in isolation, in particular, 
	(ii)  $M$ cannot prevent the passage of the time of  $Z$, as recursion processes in $M$ are always time guarded. 
\end{proof}
%
{
Let us prove Proposition~\ref{cor:false-negatives} (Anomaly detection)
\setcounter{corollary}{1}
\begin{proof}
	As $P \trans{t }Z$, for some $Z$, by an application of transition rules \rulename{Go} and 
	\rulename{Enforce} it follows that $\confCPS{\go}{P} \trans{t } \confCPS{\go}{Z}$. By an application of the implication from right to left of Theorem~\ref{thm:w-enf-ctrl} it follows that 
	$ \confCPS{\funEdit{P}}{ P | M}  \trans{t } N$, for some $N$. 
	Thus, by Definition~\ref{def:detection}, 
	we have to show that $ \confCPS{\funEdit{P}}{ P | M}  \trans{t  \alpha} N$ for no  $N $.
	As by hypothesis $P \trans{ t \alpha} Z$ for no $Z$, it follows that  $\confCPS{\go}{P} \trans{ t\cdot \alpha} N  $,
	for no $N$. Thus, by an application of Theorem~\ref{thm:w-enf-ctrl}, this time from left to right, it follows that 
	$\confCPS{\funEdit{P}}{P | M} \trans{ t \alpha} N  $, for no $N$,
	as required. 
\end{proof} 
}

\emph{Let us prove Proposition~\ref{cor:weak-enforcement} (Weak enforcement of field networks)}.
\begin{proof}
	The result cannot be directly derived by an application of Theorem~\ref{thm:w-enf-ctrl} because in our timed setting  trace equivalence $\simeq$ is not preserved by parallel composition (the problem being the negative premise in the transition rule \rulename{TimeSync}). However, since the  weak simulation $\sqsubseteq$ is notoriously preserved by parallel composition, by an application of  Proposition~\ref{prop:transparency} and Proposition~\ref{prop:safety} we can easily derive: 
	\[
	\prod_{i=1}^n \confCPS{\funEdit{P_i}}{P_i | M_i} \; \sqsubseteq \; \prod_{i=1}^n \confCPS{\go}{P_i}   \, . 
	\]
	Furthermore, by an application of Proposition~\ref{prop:transparency} and  Proposition~\ref{prop:completeness} we can derive: 
	\[
	\prod_{i=1}^n \confCPS{\funEdit{P_i}}{P_i | M_i} \; \sqsupseteq \; \prod_{i=1}^n \confCPS{\go}{P_i}   \, . 
	\]
	This is enough to derive that the two systems under investigation are trace equivalent. 
\end{proof}

%

%
\emph{Let us prove Theorem~\ref{thm:s-enf-ctrl} (Observational enforcement of controllers)}.
\begin{proof}
	Let us define four binary relations:
	\begin{itemize}
		\item ${\mathcal S} \defn \{ ( \confCPS{\transf{S}}{S | M}, \confCPS{\transf{S}}{S}) \, \mid  \textrm{ for any $S$ and $M$} \} $; 
		\item ${\mathcal S} \defn \{ ( \confCPS{\transf{S}}{J}, \confCPS{\transf{S}}{S}) \, \mid  \textrm{ for any $S$ and $M$} \} $;
		\item ${\mathcal C} \defn \{ (\confCPS{\transf{C}}{J} ,  \confCPS{\transf{C}}{C}) \, \mid  \textrm{ for any $C$ and $J$} \} $; 
		\item ${\mathcal A} \defn \{ ( \confCPS{\transf{A}}{J} ,  \confCPS{\transf{A}}{A} ) \, \mid  \textrm{ for any $A$ and $J$} \} $. 
	\end{itemize}
	We prove that the relation $\rel \defn {\mathcal P} \cup {\mathcal S} \cup {\mathcal C} \cup {\mathcal A}$ is a weak bisimulation.  
	
	For each pair $( N_1 ,  N_2 ) \in \rel$ we proceed by case analysis on why $N_1 \trans{\alpha} N_1'$. Then, we do the same for $N_2 \trans{\alpha} N_2'$. We consider the most significant cases.

	\noindent
	{\bf Case} $(N_1,N_2) \in {\mathcal C}$. We do case analysis on the structure of the controller $C$ in $N_1= \confCPS{\transf{C}}{J}$, for some arbitrary $J$. 

\noindent Let $C \equiv \timeout{\sum_{i} c_i.C_i}{\hat{C}}$. 
		\begin{itemize}
			\item Let $N_1 \trans{c_i} \confCPS{\transf{C_i}}{J'} = N'$, by an application rule \rulename{Enforce} triggered by \rulename{Ctrl} or \rulename{Inject}, alternatively, by an application of rule \rulename{Mitigation}. Then, $N_2=\confCPS{\transf{C}}{C} \trans{c_i} \confCPS{\transf{C_i}}{C_i} = N_2'$ and $(N_1',N_2') \in {\mathcal C} \subset \rel$ by construction. 
			
			\item Let $N_1 \trans{\tick} \confCPS{\transf{\hat{C}}}{J'} = N'$, by an application rule \rulename{Enforce} triggered by an application of rule \rulename{TimeSync}, alternatively, by an application of rule \rulename{Mitigation}. Then, $N_2=\confCPS{\transf{C}}{C} \trans{\tick} \confCPS{\transf{\hat{C}}}{\hat{C}} = N_2'$ and $(N_1',N_2') \in {\mathcal C} \subset \rel$ by construction.
			
			\item  Let $N_1 \trans{\tau} \confCPS{\transf{C}}{J'} = N_1'$, by an application rule \rulename{Enforce} triggered by an application of rule \rulename{Inject} or \rulename{Ctrl}, because $J \trans{\alpha} J'$ for some $\alpha \in \ActSet^{\ast}\cup \ChanSet^{\ast} \setminus  \bigcup_i c_i $. Note that the edit automaton $\transf{C}$ \emph{suppresses} all  possible injections originating from the malware or the controller not aligned with the edit automaton, turning them into $\tau$-actions. 
			Note also that if $J \trans{a} J'$, for some actuator name $a$, then no drop is actually possible.
			Thus, $N_2=\confCPS{\transf{C}}{C} \Trans{\hat{\tau}} \confCPS{\transf{C}}{C} = N_2'$ and $(N_1,N_2') \in {\mathcal C} \subset \rel$ by construction.
		\end{itemize}
		Now, we proceed by case analysis on why $N_2 \trans{\alpha} N_2'$.
		\begin{itemize}
			\item Let $N_2=\confCPS{\transf{C}}{C} \trans{c_i} \confCPS{\transf{C_i}}{C_i} = N_2'$. Then, by definition of $\transf{C}$ we have $\transf{C} \trans{\eact{c_i}{c_i}} \transf{C_i}$ by an application of rule \rulename{Enforce} because $J \trans{c_i} J'$, it follows that $N_1 \trans{c_i} \confCPS{\transf{C_i}}{J'} = N_1'$, and $(N_1', N_2') \in {\mathcal C} \subset \rel$ by construction.  
			Note that, if $J \trans{\fineC}$, then by an application rule \rulename{Mitigation}, it follows that $N_1 \trans{c_i} \confCPS{\transf{C_i}}{J} = N_1'$, and $(N_1', N_2') \in {\mathcal C} \subset \rel$ by construction. 
			
			\item Let $N_2=\confCPS{\transf{C}}{C} \trans{\tick} \confCPS{\transf{\hat{C}}}{\hat{C}} = N_2'$. Then, by definition of $\transf{C}$ we have $\transf{C} \trans{\eact{\tick}{\tick}} \transf{\hat{C}}$ and $\transf{C} \trans{\eact{\alpha}{\tau}} \transf{C}$, for any action $\alpha \in\ActSet^{\ast}\cup \ChanSet^{\ast} \setminus  \bigcup_i c_i$ performed (injected) by the malware or the controller not aligned with the edit automaton. We recall that recursion in both malware and controller code is always \emph{time-guarded}, \emph{i.e.} may not prevent the passage time. More formally, for any $J$ there is a finite $n$ providing an upper bound to the maximum number of possible consecutive untimed actions of $J$.  Thus, by  $n$ possible  applications of rule \rulename{Inject} or \rulename{Ctrl} and a final application of  rule \rulename{TimeSync}, we have:    {\small $J  \trans{\alpha_1} J_1 \trans{\alpha_2} \cdots  \trans{\alpha_n} J_n \trans{\tick} J'$}.	
			By $n+1$ applications of rule \rulename{Enforce} we get $N_1 \Trans{\tick} \confCPS{\transf{\hat{C}}}{J'} = N_1'$, with $(N_1', N_2') \in {\mathcal C}  \subset \rel$ by construction. 
		\end{itemize}

\noindent Let $C \equiv \timeout{\overline{c}.C_1}{C_2}$.  This case is similar to the previous one. 

\noindent Let $C \equiv A$.  In this case, we resort to one of the next cases.

	\noindent
	{\bf Case} $(N_1,N_2) \in {\mathcal A}$. We proceed by  case analysis on the structure of the controller $A$ in $N_1= \confCPS{\transf{A}}{J}$, for some arbitrary $J$. 

\noindent Let $A \equiv \overline{a}.\hat{A}$. 
		\begin{itemize}
			\item Let $N_1 \trans{\overline{a}} \confCPS{\transf{\hat{A}}}{J'} = N'$, by an application rule \rulename{Enforce} triggered by \rulename{Ctrl} or \rulename{Inject}, alternatively, by an application of rule \rulename{Mitigation}. Then, $N_2=\confCPS{\transf{A}}{A} \trans{\overline{a}} \confCPS{\transf{\hat{A}}}{\hat{A}} = N_2'$ and $(N_1',N_2') \in {\mathcal A} \subset \rel$ by construction.
			
			\item  Let $N_1 \trans{\tau} \confCPS{\transf{A}}{J'} = N_1'$, by an application of \rulename{Enforce} triggered by an application of \rulename{Inject} or \rulename{Ctrl}, because $J \trans{\alpha} J'$ for some $\alpha \in \{\tau\} \,\cup \,\ActSet^{\ast}\cup \ChanSet^{\ast} \setminus  \{\overline{a}\}$. Note that the edit automaton $\transf{A}$ \emph{suppresses} all  possible injections originating from the malware or the controller not aligned with the edit automaton, turning them into $\tau$-actions. Furthermore, the misalignment may also occur as the malware has dropped the current actuation \rulename{DropAct}, thus the controller  performing an actuation $\alpha \in \overline{\ActSet} \cup \setminus  \{\overline{a}\}$ will be suppressed.
			Thus, $N_2=\confCPS{\transf{A}}{A} \Trans{\hat{\tau}} \confCPS{\transf{A}}{A} = N_2'$ and $(N_1,N_2') \in {\mathcal A} \subset \rel$ by construction.
		\end{itemize}
		Now, we proceed by case analysis on why $N_2 \trans{\alpha} N_2'$.
		\begin{itemize}
			\item Let $N_2=\confCPS{\transf{A}}{A} \trans{\overline{a}} \confCPS{\transf{\hat{A}}}{\hat{A}} = N_2'$. Then, by definition of $\transf{A}$ we have $\transf{A} \trans{\eact{\overline{a}}{\overline{a}}} \transf{\hat{A}}$ by an application of rule \rulename{Enforce} because $J \trans{\overline{a}} J'$, it follows that $N_1 \trans{\overline{a}} \confCPS{\transf{\hat{A}}}{J'} = N_1'$, and $(N_1', N_2') \in {\mathcal A} \subset \rel$ by construction.  
			Note that, if $J \trans{\fineC}$ by an application of rule \rulename{Mitigation}, it follows that $N_1 \trans{\overline{a}} \confCPS{\transf{\hat{A}}}{J} = N_1'$, and $(N_1', N_2') \in {\mathcal As} \subset \rel$ by construction. 
		\end{itemize}
		
\noindent Let $A \equiv \fineC.\Xrec$.	 
		\begin{itemize}
			\item Let $N_1 \trans{\fineC} \confCPS{\transf{S}}{J'} = N'$, by an application rule \rulename{Enforce} triggered by \rulename{Ctrl}. Note that by definition of the transition rule \rulename{Rec} we end up from $\fineC.\Xrec$ to the case  $\fineC.S$. Then, $N_2=\confCPS{\transf{A}}{A} \trans{\fineC} \confCPS{\transf{S}}{S} = N_2'$ and $(N_1',N_2') \in {\mathcal S} \subset \rel$ by construction.
			
			\item  Let $N_1 \trans{\tau} \confCPS{\transf{A}}{J'} = N_1'$, by an application rule \rulename{Enforce} triggered by an application of rule \rulename{Inject} or \rulename{Ctrl}, because $J \trans{\alpha} J'$ for some $\alpha \in \ActSet^{\ast} \cup \ChanSet^{\ast}$.   Here, notice that the edit automaton $\transf{A}$ \emph{suppresses} all  possible injections originating from the malware or the controller not aligned with the edit, turning them into $\tau$-actions. 
			Thus, $N_2=\confCPS{\transf{A}}{A} \Trans{\hat{\tau}} \confCPS{\transf{A}}{A} = N_2'$ and $(N_1,N_2') \in {\mathcal A} \subset \rel$ by construction.
		\end{itemize}
		Now, we proceed by case analysis on why $N_2 \trans{\alpha} N_2'$.
		\begin{itemize}
			\item Let $N_2=\confCPS{\transf{A}}{A} \trans{\fineC} \confCPS{\transf{S}}{S} = N_2'$. Here notice that by definition of the transition rule \rulename{Rec} we end up from $\fineC.\Xrec$ to the case  $\fineC.S$. Then, by definition of $\transf{A}$ we have $\transf{A} \trans{\eact{\fineC}{\fineC}} \transf{S}$ 
			and $\transf{A} \trans{\eact{\alpha}{\tau}} \transf{A}$, for any action $\alpha \in \ActSet^{\ast} \cup \ChanSet^{\ast}$ performed (injected) by the malware or the controller not aligned with the edit. Recall that the recursion in both malware and controller code is always \emph{time-guarded}, \emph{i.e.} may not prevent the passage time. Thus, the controller can always perform $\fineC$. More formally, for any $J$ there is a finite integer $n$ providing an upper bound to the maximum number of possible consecutive untimed actions of $J$.  Thus, by  $n$ possible  applications of rule \rulename{Inject} or \rulename{Ctrl} and a final application of  rule \rulename{Enforce}, we have:  {\small $J  \trans{\alpha_1} J_1 \trans{\alpha_2} \cdots  \trans{\alpha_n} J_n \trans{\fineC} J'$}. By $n+1$ applications of rule \rulename{Enforce} we get $N_1 \Trans{\fineC} \confCPS{\transf{S}}{J'} = N_1'$, with $(N_1', N_2') \in {\mathcal S}  \subset \rel$ by construction. 
		\end{itemize}
\end{proof}

	Before proving Proposition~\ref{cor:false-negatives-2} we need the following technical result, saying that controllers never deadlock.

	\begin{lemma}
		\label{lem:false-negatives-2}
		For any closed $Z \in \mathbbm{Ctrl} \cup \mathbbm{Sens} \cup \mathbbm{Comm} \cup \mathbbm{Act} $, it holds that
		$ Z  \trans{ \beta } \hat{Z}  $, for some $ \beta$ and $\hat{Z}$. 
	\end{lemma}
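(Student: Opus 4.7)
The plan is to proceed by case analysis on the top-level syntactic form of the closed term $Z$, exhibiting for each form a directly applicable rule from Table~\ref{tab:sem-ctrl}. Specifically, I would observe: $\timeout{\sum_i s_i.S_i}{S'}$ admits $\trans{\tick} S'$ via \rulename{TimeoutS}; $\tick.S$ admits $\trans{\tick} S$ via \rulename{TimeS}; $\timeout{\sum_i c_i.C_i}{C'}$ admits $\trans{\tick} C'$ via \rulename{TimeoutInC}; $\timeout{\overline{c}.C_1}{C_2}$ admits $\trans{\tick} C_2$ via \rulename{TimeoutOutC} (it also admits $\trans{\overline{c}}C_1$ via \rulename{OutC}); $\overline{a}.A$ admits $\trans{\overline{a}} A$ via \rulename{WriteA}; and $\fineC.P$ admits $\trans{\fineC} P$ via \rulename{End}. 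In each of these cases there is literally an axiom firing, no premises to discharge.

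The only non-immediate case is $Z = \fix \Xrec S$, for which I would use rule \rulename{Rec}: it suffices to exhibit a transition of the unfolded term $S\{\fix \Xrec S/\Xrec\}$. Here I would rely on two syntactic observations. First, by inspection of the grammars of $\mathbbm{Sens}$, $\mathbbm{Comm}$, $\mathbbm{Act}$, the process variable $\Xrec$ can only appear at the actuator-leaf position $\fineC.\Xrec$; in particular it never appears at the top level. Hence the top-level constructor of $S\{\fix \Xrec S/\Xrec\}$ is identical to that of $S$, so one of the non-recursive cases above supplies the required transition. Second, the time-guardedness assumption on recursion rules out any degenerate form in which $S$ could immediately reduce to $\Xrec$ without passing through a timed or observable prefix.

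The closedness hypothesis is used precisely to rule out the otherwise problematic syntactic form $Z = \fineC.\Xrec$ at top level: since $\Xrec$ is free in that term, it cannot be a closed $Z$ considered in isolation. Any closed occurrence of the shape $\fineC.(\cdot)$ must therefore be of the form $\fineC.P$ and is covered by \rulename{End} (this case also arises naturally when unfolding $\fix \Xrec S$ via \rulename{Rec}, since the substitution replaces every bound $\Xrec$ by the closed term $\fix \Xrec S \in \mathbbm{Ctrl}$).

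I do not foresee any substantial obstacle: the lemma is, at heart, the syntactic observation that every constructor of the controller calculus is productive (either through a $\tick$-timeout, an output/read prefix, or the end-of-cycle signal), and the only subtlety is handling the recursion/unfolding case cleanly via the fact that the grammar restricts $\Xrec$ to leaf positions in $\mathbbm{Act}$.
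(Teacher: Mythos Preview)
Your proposal is correct and follows essentially the same approach as the paper's own proof: a case analysis on the syntactic shape of $Z$, using closedness to exclude the form $\fineC.\Xrec$, and handling $\fix\Xrec S$ via rule \rulename{Rec} by reducing to the non-recursive cases. If anything, you are more explicit than the paper, which spells out only the sensor-timeout case and dismisses the rest as similar.
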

	\begin{proof}
		The proof is by induction on the structure of $Z$.
		\begin{itemize}
			\item 
			If $Z \equiv \fix \Xrec S$ then, by definition of the transition rule \rulename{Rec}, we end up to one of the other cases. 
			\item  If $Z \equiv \timeout{\sum_{i} s_i.S_i}{\hat{S}}$ then we can apply two different transitions rules:
			\begin{itemize}
				\item by an application of  rule \rulename{ReadS} we have $\hat{Z} \trans{s_j}S_j $, for some $j \in I$; 
				\item by an application of rule \rulename{TimeoutS} we have  $\hat{Z} \trans{\tick}S  $. 
			\end{itemize}
			In both cases, we have  $   \timeout{\sum_{i} s_i.S_i}{\hat{S}} \trans{ \beta } \hat{Z}  $, for some $ \beta$ and $\hat{Z}$.  
			\item 
			The other cases can be proved in a similar manner since $Z$ is closed, and in particular $Z\neq \fineC.\Xrec$.
		\end{itemize}
	\end{proof}

	\emph{Let us prove Proposition~\ref{cor:false-negatives-2} (Anomaly detection and mitigation)}.
	\begin{proof}
		Let us prove the first item of the proposition.
		As there is a controller $Z$ such that $P \trans{t }Z$, by an application of transition rules \rulename{Go} and 
		\rulename{Enforce} it follows that $\confCPS{\go}{P} \trans{t } \confCPS{\go}{Z}$. By an application of the implication from right to left of  Theorem~\ref{thm:s-enf-ctrl} it follows that 
		$ \confCPS{\funEdit{P}}{ P | M}  \trans{t } N$, for some $N$. 
		Thus, by Definition~\ref{def:detection}, 
		we have to show that $ \confCPS{\funEdit{P}}{ P | M}  \trans{t  \alpha} N$ for no  $N $.
		As by hypothesis $P \trans{ t \alpha} Z$ for no $Z$, it follows that  $\confCPS{\go}{P} \trans{ t\cdot \alpha} N  $,
		for no $N$. Thus, by an application of Theorem~\ref{thm:s-enf-ctrl}, this time from left to right, it follows that 
		$\confCPS{\funEdit{P}}{P | M} \trans{ t \alpha} N  $, for no $N$,
		as required. 
		
		Let us prove the second item of the proposition. We proceed by contradiction, showing  that 
		$\confCPS{\transf{P}}{P | M} \trans{ t \beta} N  $, for some $\beta \neq \alpha$.
		Suppose $\confCPS{\transf{P}}{P | M} \trans{ t \beta} N  $, for no  $ \beta$ and $N$.
		By an application Theorem~\ref{thm:s-enf-ctrl}, from right to left,   it follows that 
		$ \confCPS{\go}{P}   \trans{ t \beta} N  $, for no $ \beta$ and $N$. Thus,  
		$ P  \trans{ t \beta} Z  $, for no $ \beta$ and $Z$. Since, by hypothesis,  $ P  \trans{ t } Z  $, for some $Z$,
		it holds that $ Z  \trans{ \beta } \hat{Z}  $, for no $ \beta$ and $\hat{Z}$. 
		This is in contradiction with Lemma \ref{lem:false-negatives-2}.
		As a consequence, it must be that $\confCPS{\transf{P}}{P | M} \trans{ t \beta} N  $ for some $\beta \neq \alpha$.
		Now, there are two possibilities:
		\begin{itemize}
			\item either $\beta$ is derived by an application of rule \rulename{Mitigation}, with 
			  $\alpha=\fineC$ and  $\transf{P}\trans{\eact{\beta}{\beta}}$, and by definition of our synthesis function, we have that  $\tau \neq \beta \neq \alpha=\fineC$;  
			\item or $\beta$ is derived by an application of rule \rulename{Enforce} and, 
			by definition our synthesis,  the monitor $\transf{P}$   suppresses the action $\alpha$,
			namely, $\alpha \neq \tau$ and $\beta=\tau$. 
		\end{itemize}
	\end{proof}

\end{document}

{\color{blue}
	Having defined the possible actions of  a monitored field network, we can easily 
	define \emph{execution traces}. 
	\begin{definition}[Execution traces]
		Given a trace {\small $t=\alpha_1 \ldots \alpha_k$}, we  write {\small $N \trans{t} N'$} as an abbreviation for {\small $N=N_0\trans{\alpha_1}N_1 \cdots N_{k-1}\trans{\alpha_k}N_k=N'$}. 
		We say that a trace is a  \emph{$\tau$-trace} if it may  only contain an arbitrary number of $\tau$-actions. We say that a trace is \emph{complete} if it is  of the form $t {\cdot} \fineC$.
	\end{definition}
	Intuitively, complete traces represent the complete execution of an arbitrary number of  scan cycles.

	Execution traces  can be used to formally define both notions of  \nolinebreak \emph{ano\-ma\-ly} \emph{detection} and \emph{correction},  achieved by the monitoring edit automaton. \nolinebreak Intui\-tively, the  detection occurs whenever the edit automaton does not allow  the \nolinebreak  execu\-tion of a certain \emph{observable action} $\alpha$ proposed by a compromised controller; if $\alpha $ 
	is replaced with a different action $\beta$ then the  automaton does correction, if $\beta \neq \tau$, or \nolinebreak suppression, if $\beta = \tau$. 
	\begin{definition}[Anomaly detection and correction]
		\label{def:detection}
		Let  $J= P | M$ be a  compromised controller.  
		We say that  an  edit automaton 
		$\Edit$  \emph{detects an anomaly} of $J$ when trying the execution of 
		some \emph{observable action} $\alpha$ in the trace $t  \alpha$, only if:
		\begin{itemize}
			\item    {\small $P\trans{t}Z$}  ($t$ is a genuine trace of $P$); 
			\item  {\small $J\trans{t\alpha}J'$}, for some $J'\!$; 
			\item    {\small $\confCPS{\Edit}{J} \trans{t}   N $}, for some $N$ ($\Edit$ does allow  the trace $t$), and   {\small $\confCPS{\Edit}{J}  \trans{t\alpha}  N  $}, for no $N$  ($\Edit$ does not allow \nolinebreak $t \alpha$).
		\end{itemize}
		We say that  $\Edit$ \emph{corrects} (\emph{resp.}, \emph{suppresses}) the  \emph{observable action} $\alpha$ of the trace $t\alpha$  of $J$ only if {\small $\confCPS{\Edit}{J} \trans{t\beta} N'$}, for some action $\beta$, with $\tau \neq \beta \neq \alpha$ (\emph{resp.}, $\beta = \tau$). 
		
		\noindent
		If {\small $P \trans{t\alpha} Z$}, for \nolinebreak some \nolinebreak $Z$,
		then we say that there is a \emph{false positive} when trying the execution of $\alpha$. 
	\end{definition}
}

{\color{blue}
A consequence of Theorem~\ref{thm:s-enf-ctrl} is  the  detection (no \emph{false negatives}) and correction (\emph{mitigation}) of alterations of the PLC behaviour due the presence of injected malware. 
\begin{proposition}[Detection and mitigation]
	\label{cor:false-negatives-2}
	Let $P \in \mathbbm{Ctrl}$ be an arbitrary controller,  $M \in \mathbbm{Malw}$ be  an arbitrary malware, and $J=P|M$ the derived compromised controller. 
	\begin{enumerate}
		\item  
		If    {\small $J \trans{t  \alpha}J'$} for some genuine trace $t$ of $P$ (\emph{i.e.}, {\small $P \trans{t} Z$}, for some $Z$), 
		for some \emph{observable action} $\alpha$,  but {\small $P \trans{ t \alpha} Z$} for no $Z$,   then the monitor $\transf{P}$ \emph{detects} an anomaly of $J$  when trying the execution of  the incorrect  action $\alpha$ of the trace $t  \alpha$. 
		\item 
		Whenever $\transf{P}$  detects an anomaly $\alpha$ \nolinebreak in  $J$, it \emph{mitigates} the anomaly either by 
		correcting the action \nolinebreak $\alpha$  with an action $\beta$, $\tau \neq \beta \neq\alpha$, or by suppressing \nolinebreak the \nolinebreak action \nolinebreak $\alpha$.
	\end{enumerate}
\end{proposition}
}

{\color{blue}
Before proving Proposition~\ref{cor:false-negatives-2} we need the following technical result, saying that controllers never deadlock.
\begin{lemma}
	\label{lem:false-negatives-2}
	For any closed $Z$, it holds that
	$ Z  \trans{ \beta } \hat{Z}  $, for some $ \beta$ and $\hat{Z}$. 
\end{lemma}
\begin{proof}
	The proof is by induction on the structure of $Z$.
	\begin{itemize}
		\item 
		If $Z \equiv \fix \Xrec S$ then, by definition of the transition rule \rulename{Rec}, we end up to the other cases. 
		\item  If $Z \equiv \timeout{\sum_{i} s_i.S_i}{\hat{S}}$ then we can apply two different transitions rules:
		\begin{itemize}
			\item by an application of  rule \rulename{ReadS} we have $\hat{Z} \trans{s_j}S_j $, for some $j \in I$; 
			\item by an application of rule \rulename{TimeoutS} we have  $\hat{Z} \trans{\tick}S  $. 
		\end{itemize}
		In both cases, we have  $   \timeout{\sum_{i} s_i.S_i}{\hat{S}} \trans{ \beta } \hat{Z}  $, for some $ \beta$ and $\hat{Z}$.  
		\item 
		The other cases can be proved in a similar manner since $Z$ is closed and so $Z\neq \fineC.\Xrec$.
	\end{itemize}
\end{proof}

\setcounter{corollary}{3}
\begin{corollary}[Detection and mitigation]
	Let $P \in \mathbbm{Ctrl}$ be an arbitrary controller,  $M \in \mathbbm{Malw}$ be  an arbitrary malware, and $J=P|M$ the derived compromised controller. 
	\begin{enumerate}
		\item  
		If    $J \trans{t  \alpha}J'$ for some genuine trace $t$ of $P$ (\emph{i.e.}, $P \trans{t} Z$, for some $Z$), 
		for some \emph{observable action} $\alpha$,  but $P \trans{ t \alpha} Z$ for no $Z$,   then the monitor $\transf{P}$ \emph{detects} an anomaly of $J$  when trying the execution of  the incorrect  action $\alpha$ of the trace $t  \alpha$. 
		\item 
		Whenever $\transf{P}$  detects an anomaly $\alpha$ \nolinebreak in  $J$, it \emph{mitigates} the anomaly either by 
		correcting the action \nolinebreak $\alpha$  with an action $\beta$, $\tau \neq \beta \neq\alpha$, or by suppressing \nolinebreak the \nolinebreak action \nolinebreak $\alpha$.
	\end{enumerate}
\end{corollary}
\begin{proof}
	Let us prove the first item.
	As $P \trans{t }Z$, for some $Z$, by an application of transition rules \rulename{Go} and 
	\rulename{Enforce} it follows that $\confCPS{\go}{P} \trans{t } \confCPS{\go}{Z}$. By an application of the implication from right to left of  Theorem~\ref{thm:s-enf-ctrl} it follows that 
	$ \confCPS{\funEdit{P}}{ P | M}  \trans{t } N$, for some $N$. 
	Thus, by Definition~\ref{def:detection}, 
	we have to show that $ \confCPS{\funEdit{P}}{ P | M}  \trans{t  \alpha} N$ for no  $N $.
	However, if $P \trans{ t \alpha} Z$ for no $Z$, then  $\confCPS{\go}{P} \trans{ t\cdot \alpha} N  $,
	for no $N$. Thus, by an application of Theorem~\ref{thm:s-enf-ctrl}, this time from left to right, it follows that 
	$\confCPS{\funEdit{P}}{P | M} \trans{ t \alpha} N  $, for no $N$,
	as required. 
	\\
	Let us prove the second item.
	Let us prove before   by contradiction that 
	$\confCPS{\transf{P}}{P | M} \trans{ t \beta} N  $ for some $\beta \neq \alpha$.
	Suppose $\confCPS{\transf{P}}{P | M} \trans{ t \beta} N  $, for no  $ \beta$ and $N$.
	By an application Theorem~\ref{thm:s-enf-ctrl}, from right to left,   it follows that 
	$ \confCPS{\go}{P}   \trans{ t \beta} N  $, for no $ \beta$ and $N$. Thus,  
	$ P  \trans{ t \beta} Z  $, for no $ \beta$ and $Z$. Since, by hypothesis,  $ P  \trans{ t } Z  $, for some $Z$,
	it holds that $ Z  \trans{ \beta } \hat{Z}  $, for no $ \beta$ and $\hat{Z}$. 
	This is in contradiction with Lemma~\ref{lem:false-negatives-2}.
	Hence we have that $\confCPS{\transf{P}}{P | M} \trans{ t \beta} N  $ for some $\beta \neq \alpha$.
	We are ready now to prove the thesis:
	\begin{itemize}
		\item $\beta$ is derived by an application of rule \rulename{Mitigation} and hence 
		$\alpha=\fineC$ and  $\transf{P}\trans{\eact{\beta}{\beta}}$. By definition of the synthesis function, we have that  $\tau \neq \beta \neq \alpha=\fineC$.  
		\item $\beta$ is derived by an application of rule \rulename{Enforce} and hence, 
		by definition of the synthesis function,  $\transf{P}$   suppresses the action $\alpha$,
		namely,$\alpha \neq \tau$ and $\beta=\tau$. 
	\end{itemize}
\end{proof} 
	
}